\newtheorem{theorem}{Theorem}[section]
\newtheorem{lemma}[theorem]{Lemma}
\newtheorem{cor}[theorem]{Corollary}
\newtheorem{prop}[theorem]{Proposition}
\theoremstyle{definition}
\newtheorem{definition}[theorem]{Definition}
\newtheorem{example}[theorem]{Example}
\theoremstyle{remark}
\newtheorem{remark}[theorem]{Remark}
\newcommand{\R}{\mathbb R}
\newcommand{\x}{\vec x}
\newcommand{\y}{\vec y}
\newcommand{\fr}{\vec \varphi}
\newcommand{\M}{{\mathcal M}}
\numberwithin{equation}{section}
\begin{document}

\title[Discontinuities of multidimensional size functions]{Necessary conditions for discontinuities of multidimensional size functions}

%    Information for first author
\author{A. Cerri}
%    Address of record for the research reported here
\address{Andrea Cerri, ARCES, Universit\`a di Bologna,
via Toffano $2/2$, I-$40135$ Bologna, Italia\newline Dipartimento di Matematica, Universit\`a di Bologna,
P.zza di Porta S. Donato 5, I-$40126$ Bologna,
Italia}
%    Current address
%\curraddr{Department of Mathematics and Statistics,
%Case Western Reserve University, Cleveland, Ohio 43403}
\email{cerri@dm.unibo.it}
%%    \thanks will become a 1st page footnote.
%\thanks{The first author was supported in part by NSF Grant \#000000.}

%    Information for second author
\author{P. Frosini}
\address{Patrizio Frosini (corresponding author), ARCES, Universit\`a di Bologna,
via Toffano $2/2$, I-$40135$ Bologna, Italia\newline Dipartimento di Matematica, Universit\`a di Bologna,
P.zza di Porta S. Donato 5, I-$40126$ Bologna,
Italia, tel. +39-051-2094478, fax. +39-051-2094490}
\email{frosini@dm.unibo.it}

%    General info
\subjclass[2000]{Primary 55N35, 58C05, 68U05; Secondary 49Q10}

\date{\today} %and, in revised form, June 22, 2001.}

%\dedicatory{This paper is dedicated to our advisors.}

\keywords{Multidimensional size function, Size Theory, Topological Persistence}

\begin{abstract}
Some new results about multidimensional Topological Persistence
are presented, proving that the discontinuity points of a
$k$-dimensional size function are necessarily related to the
pseudocritical or special values of the associated measuring
function.\end{abstract}

\maketitle

\section*{Introduction}
Topological Persistence is devoted to the study of stable
properties of sublevel sets of topological spaces and, in the course of its development, has revealed itself to be
a suitable framework when dealing with applications in the field
of Shape Analysis and Comparison.
%More formally, it concerns the
%study of the natural pseudodistance between compact and locally
%connected topological spaces, endowed with real-valued continuous
%functions (cf. ), called measuring functions.
Since the beginning of the 1990s research on this subject has
been carried out under the name of Size Theory, studying the
concept of {\em size function}, a mathematical tool able to
describe the qualitative properties of a shape in a quantitative
way. More precisely, the main idea is to model a shape by a
topological space $\M$ endowed with a continuous function
$\varphi$, called {\em measuring function}. Such a function is
chosen according to applications and can be seen as a descriptor
of the features considered relevant for shape characterization.
Under these assumptions, the size function $\ell_{(\M,\varphi)}$
associated with the pair $(\M,\varphi)$ is a descriptor of the
topological attributes that persist in the sublevel sets of $\M$
induced by the variation of $\varphi$. According to this approach,
the problem of comparing two shapes can be reduced to the simpler
comparison of the related size functions. Since their
introduction, these shape descriptors have been widely studied and
applied in quite a lot of concrete applications concerning Shape
Comparison and Pattern Recognition (cf., e.g.,
\cite{BiGiSpFa08,CeFeGi06,DiFrPa04,UrVe97,VeUr96,VeUrFrFe93}).
From a more theoretical point of view, the notion of size function
plays an essential role since it is strongly related  to the one
of {\em natural pseudodistance}. This is another key tool of Size
Theory, defining a (dis)similarity measure between compact and
locally connected topological spaces endowed with measuring
functions (see \cite{BiDeFaFrGiLaPaSp08} for historical references
and \cite{DoFr04,DoFr07,DoFr} for a detailed review about the
concept of natural pseudodistance). Indeed, size functions provide
easily computable lower bounds for the natural pseudodistance (cf.
\cite{dAFrLabis,dAFrLa,DoFr04bis}).

Approximately ten years after the introduction of Size Theory,
Persistent Homology re-proposed similar ideas from the homological
point of view (cf. \cite{EdLeZo02}; for a survey on this topic see
\cite{EdHa08}). In this context, the notion of size function
coincides with the dimension of the $0$-th persistent homology
group, i.e. the $0$-th rank invariant \cite{CaZo09}.

We refer the interested reader to Appendix \ref{Appendix} for more information
about the relationship existing between Size Theory and Persistent
Homology.

The study of Topological Persistence is capturing more and more
attention in the mathematical community, with particular reference
to the multidimensional setting (see \cite{EdHa08,Gh08}). When
dealing with size functions, the term {\em multidimensional} means
that the measuring functions are vector-valued, and has no
reference to the dimension of the topological space under study.
However, while the basic properties of a size function $\ell$ are
now clear when it is associated with a measuring function $\varphi$
taking values in $\mathbb{R}$, very little is known when $\varphi$
takes values in $\R^k$. More precisely, some questions about the
structure of size functions associated with $\R^k$-valued
measuring functions need further investigation, with particular
reference to the localization of their discontinuities. Indeed,
this last research line is essential in the development of
efficient algorithms allowing us to apply Topological Persistence
to concrete problems in the multidimensional context.

In this paper we start to fill this gap by proving a new result on
the discontinuities of the so-called multidimensional size
functions, showing that they can be located only at points with at
least one \emph{pseudocritical} or \emph{special} coordinate
(Theorem \ref{Teoremone} and Theorem \ref{Teoremone2}). This is
proved by using an approximation technique and the theoretical
machinery developed in \cite{BiCeFrGiLa08}, improving the
comprehension of multidimensional Topological Persistence and
laying the basis for its computation.

This paper is organized in two sections. In Section
\ref{BasicResult} the basic results about multidimensional size
functions are recalled, while in Section 2 our main theorems are
proved.

\section{Preliminary Results on Size Theory}\label{BasicResult}

The main idea in Size Theory is to study a given shape by
performing a geometrical/topological exploration of a suitable
topological space $\mathcal{M}$, with respect to some properties
expressed by an $\mathbb{R}^k$-valued continuous function
$\vec\varphi=(\varphi_1,\dots,\varphi_k)$ defined on
$\mathcal{M}$. Following this approach, Size Theory introduces the
concept of {\em size function} as a stable and compact descriptor
of the topological changes occurring in the lower level sets
$\{P\in\mathcal{M}:\varphi_i(P)\leq t_i,i=1,\dots,k\}$ as $\vec
t=(t_1,\dots,t_k)$ varies in $\mathbb{R}^k$.

In this section we recall some basic definitions and results about
size functions, confining ourselves to those that will be useful
in the rest of this paper. For a deeper investigation on these
topics, the reader is referred to
\cite{BiCeFrGiLa08,BiDeFaFrGiLaPaSp08,FrMu99}. For further details
about Topological Persistence in the multidimensional setting, see
\cite{CaZo09,FrMu99}.

In proving our new results we need to assume that $\M$ is a closed
$C^1$ Riemannian manifold. However, we prefer to report here the
basic concepts of Size Theory in their classical formulation, i.e.
by supposing that $\mathcal{M}$ is a non-empty compact and locally
connected Hausdorff space. We shall come back to the case of a
$C^1$ Riemannian manifold later.

In the context of Size Theory, any pair $(\mathcal{M},\fr)$, where
$\fr=(\varphi_1,\dots,\varphi_k):\mathcal{M}\rightarrow\R^k$ is a
continuous function, is called a \emph{size pair}. The function
$\fr$ is said to be a \emph{$k$-dimensional measuring function}.
The relations $\preceq$ and $\prec$ are defined in $\R^k$ as
follows: for $\x=(x_1,\dots,x_k)$ and $\y=(y_1,\dots,y_k)$, we
write $\x\preceq\y$ (resp. $\x\prec\y$) if and only if $x_i\leq\
y_i$ (resp. $x_i<y_i$)
for every index $i=1,\dots,k$. Furthermore, %moreover
$\R^k$ is equipped %endowed
with the usual $\max$-norm:
$\left\|(x_1,x_2,\dots,x_k)\right\|_{\infty}=\max_{1\le i\le
k}|x_i|$. Now we are ready to introduce the concept of size
function for a size pair $(\mathcal{M},\fr)$. We shall denote the
open set $\{(\x,\y)\in\R^k\times\R^k:\x\prec \y\}$ by $\Delta^+$,
while $\bar\Delta^+$ will be the closure of $\Delta^+$. For every
$k$-tuple $\x=(x_1,\dots,x_k)\in\R^k$, the set
$\mathcal{M}\langle\fr\preceq \x\,\rangle$  will be defined as
$\{P\in\mathcal{M}:\varphi_i(P)\leq x_i,\ i=1,\dots,k\}$.

\begin{definition}\label{firstDef}
For every $k$-tuple $\y=(y_1,\dots,y_k)\in\R^k$, we shall say that
two points $P,Q\in \mathcal{M}$ are $\langle\fr\preceq
\y\,\rangle$-\emph{connected} if and only if a connected subset of
$\mathcal{M}\langle\fr\preceq \y\,\rangle$ exists, containing $P$
and $Q$.
\end{definition}

\begin{definition}\label{secondDef}
We shall call the \emph{($k$-dimensional) size function}
associated with the size pair $(\mathcal{M},\fr)$ the function
$\ell_{(\mathcal{M},\fr)}:\Delta^+\rightarrow\mathbb{N}$, defined
by setting $\ell_{(\mathcal{M},\fr)}(\x,\y)$ equal to the number
of equivalence classes in which the set
$\mathcal{M}\langle\fr\preceq \x\,\rangle$ is divided by the
$\langle\fr\preceq \y\,\rangle$-connectedness relation.
\end{definition}

\begin{remark}\label{altdef}
In other words, $\ell_{(\mathcal{M},\fr)}(\x,\y)$ is equal to the
number of connected components in $\mathcal{M}\langle\fr\preceq
\y\,\rangle$ containing at least one point of
$\mathcal{M}\langle\fr\preceq \x\,\rangle$. The finiteness of this
number is a consequence of the compactness and local connectedness of $\M$ (cf. \cite{FrLa97}).
\end{remark}

In the following, we shall refer to the case of measuring
functions taking value in $\mathbb{R}^k$ by using the term
``$k$-dimensional''. Before going on, we introduce the following
notations: when $\vec y\in\R^k$ is fixed, the symbol
$\ell_{(\M,\vec\varphi)}(\cdot,\vec y)$ will be used to denote the
function that takes each $k$-tuple $\vec x\prec\vec y$ to the
value $\ell_{(\M,\vec\varphi)}(\vec x,\vec y)$. An analogous
convention will hold for the symbol $\ell_{(\M,\vec\varphi)}(\vec
x,\cdot)$.

\begin{remark}\label{monotone}
From Remark \ref{altdef} it can be immediately deduced that for
every fixed $\vec y\in\R^k$ the function
$\ell_{(\M,\varphi)}(\cdot,\vec y)$ is non--decreasing with
respect to $\preceq$, while for every fixed $\vec x\in\R^k$ the
function $\ell_{(\M,\varphi)}(\vec x,\cdot)$ is non--increasing.
\end{remark}

\subsection{The particular case {\boldmath $k=1$}}\label{PartCase1}

In this section we will discuss the specific framework of
measuring functions taking values in $\mathbb{R}$, namely the
$1$-dimensional case. Indeed, Size Theory has been extensively
developed in this setting (cf. \cite{BiDeFaFrGiLaPaSp08}), showing
that each $1$-dimensional size function admits a compact
representation as a formal series of points and lines of
$\mathbb{R}^2$ (cf. \cite{FrLa01}). Due to this representation, a
suitable {\em matching distance} between $1$-dimensional size
functions can be easily introduced, proving that these descriptors
are stable with respect to such a distance \cite{dA02,dAFrLa}.
Moreover, the role of $1$-dimensional size functions is crucial in
the approach to the $k$-dimensional case proposed in
\cite{BiCeFrGiLa08}.

Following the notations used in the literature about the case
$k=1$, the symbols $\vec\varphi$, $\vec x$, $\vec y$, $\preceq$, $\prec$ will be
replaced respectively by $\varphi$, $x$, $y$, $\le$, $<$.

When dealing with a ($1$-dimensional) measuring function
$\varphi:\M\to\R$, the size function $\ell_{(\M,\varphi)}$
associated with $(\M,\varphi)$ gives information about the pairs
$\left(\mathcal{M}\langle\varphi\leq
x\rangle,\mathcal{M}\langle\varphi\leq y\rangle\right)$, where
$\mathcal{M}\langle\varphi\leq t\rangle$ is defined by setting
$\mathcal{M}\langle\varphi\leq t\rangle=\{P\in\M:\varphi(P)\leq
t\}$ for $t\in\R$.

Figure \ref{FigEs} shows an example of a size pair and the
associated $1$-dimensional size function.
\begin{figure}[h]
\psfrag{M}{{\Large $\mathcal{M}$}}
\psfrag{F}{{\large $\varphi$}}
\psfrag{N}{\!\!\!{\Large $\ell_{(\mathcal{M},\varphi)}$}}
\psfrag{(A)}{{\large $(a)$}}\psfrag{(E)}{$(b)$}
\psfrag{x}{{\large $x$}}\psfrag{y}{\large $y$}
\psfrag{a}{\textcolor[rgb]{0.00,0.00,1.00}{$a$}}
\psfrag{b}{\textcolor[rgb]{0.00,0.00,1.00}{$b$}}
%\psfrag{C1}{\!\!\!\!\!\! {\boldmath$\forall$} $Q\in\mathcal{M}$,\ {\boldmath$ \varphi$}$(Q)=d(P,Q)$}
\begin{center}
\includegraphics[width=0.8\textwidth]{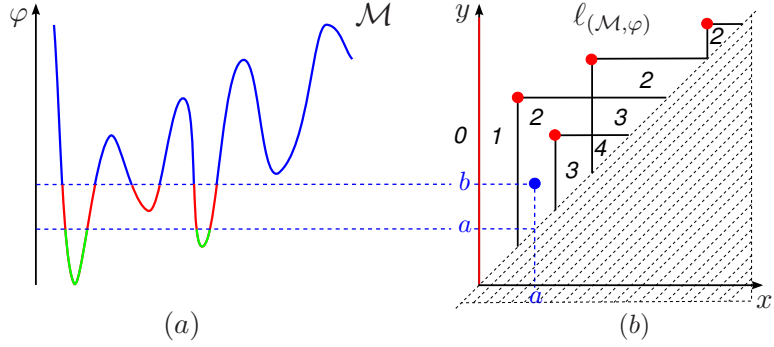}
\end{center}
\caption{$(a)$ The topological spaces $\mathcal{M}$ and the
measuring function $\varphi$. $(b)$ The related size function
$\ell_{(\M,\varphi)}$.}\label{FigEs}
%$\mathcal{N}$ and the size functions $\ell_{(\mathcal{M},F_{(\vec
%l,\vec b)}^{\fr})},\ell_{(\mathcal{N},F_{(\vec l,\vec
%b)}^{\vec\psi})}$ associated with the half-plane $\pi_{(\vec l,\vec
%b)}$, for $\vec l=(\frac{\sqrt{2}}{2},\frac{\sqrt{2}}{2})$ and $\vec
%b=(0,0)$.}\label{figura:FigArt}
\end{figure}
On the left (Figure \ref{FigEs}$(a)$) one can find the considered
size pair $(\M,\varphi)$, where $\M$ is the curve depicted by a
solid line, and $\varphi$ is the ordinate function. On the right
(Figure \ref{FigEs}$(b)$) the associated $1$-dimensional size
function $\ell_{(\M,\varphi)}$ is given. As can be seen, the
domain $\Delta^+=\{(x,y)\in\R^2:x<y\}$ is divided into bounded and
unbounded regions, in each of which the $1$-dimensional size
function takes a constant value. The displayed numbers coincide
with the values of $\ell_{(\M,\varphi)}$ in each region. For
example, let us now compute the value of $\ell_{(\M,\varphi)}$ at
the point $(a,b)$. By applying Remark \ref{altdef} in the case
$k=1$, it is sufficient to count how many of the three connected
components in the sublevel $\mathcal{M}\langle\varphi\leq
b\rangle$ contain at least one point of
$\mathcal{M}\langle\varphi\leq a\rangle$. It can be easily
verified that $\ell_{(\M,\varphi)}(a,b)=2$.

Following the $1$-dimensional framework, the problem of comparing
two size pairs can be easily translated into the simpler one of
comparing the related $1$-dimensional size functions. In
\cite{dAFrLa}, the {\em matching distance} $d_{match}$ has been formally proven
to be the most suitable distance between these descriptors. The
definition of $d_{match}$ is based on the observation that
$1$-dimensional size functions can be compactly described by a
formal series of points and lines lying on the real plane, called
respectively {\em proper cornerpoints} and {\em cornerpoints at
infinity} (or {\em cornerlines}) and defined as follows:

\begin{definition}
For every point $P=(x,y)$ with $x<y$, consider the number $\mu(P)$
defined as the minimum, over all the positive real numbers
$\varepsilon$ with $x+\varepsilon<y-\varepsilon$, of
$$
\ell_{(\M,\varphi)}(x+\varepsilon,y-\varepsilon)-\ell_{(\M,\varphi)}(x-\varepsilon,y-\varepsilon)
-\ell_{(\M,\varphi)}(x+\varepsilon,y+\varepsilon)+\ell_{(\M,\varphi)}(x-\varepsilon,y+\varepsilon).
$$
When this finite number, called {\em multiplicity of $P$}, is
strictly positive, the point $P$ will be called a {\em proper
cornerpoint} for $\ell_{(\M,\varphi)}$.
\end{definition}

\begin{definition}
For every line $r$ with equation $x=a$, consider the number
$\mu(r)$ defined as the minimum, over all the positive real
numbers $\varepsilon$ with $a+\varepsilon<1/\varepsilon$, of
$$
\ell_{(\M,\varphi)}(a+\varepsilon,1/\varepsilon)-\ell_{(\M,\varphi)}(a-\varepsilon,1/\varepsilon).
$$
When this finite number, called {\em multiplicity of $r$}, is
strictly positive, the line $r$ will be called a {\em cornerpoint
at infinity} (or {\em cornerline})  for $\ell_{(\M,\varphi)}$.
\end{definition}

%Moreover, $1$-dimensional size functions have proven to
%This is why, in (ref.), the authors study the $k$-dimensional setting by a suitable a suitable reduction to the $1$-dimensional case.
The fundamental role of proper cornerpoints and cornerpoints at
infinity is explicitly shown in the following Representation
Theorem, claiming that their multiplicities completely and
univocally determine the values of $1$-dimensional size functions.

For the sake of simplicity, each line of equation $x=a$ will be
identified to a point at infinity with coordinates $(a,\infty)$.

\begin{theorem}[Representation Theorem]\label{RepresentationTheorem}
For every $\bar x<\bar y<\infty$, it holds that
$$
\ell_{(\M,\varphi)}(\bar x,\bar y)=\sum_{
x\leq\bar x \atop
\bar y<y\leq\infty}\mu((x,y)).
$$
\end{theorem}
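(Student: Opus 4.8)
The plan is to express $\ell_{(\M,\varphi)}(\bar x,\bar y)$ as a sum of "local jumps" of the size function and then identify each such jump with the multiplicity $\mu((x,y))$ of a cornerpoint (proper or at infinity). Throughout I would use the monotonicity recorded in Remark \ref{monotone}: for fixed $y$ the map $x\mapsto\ell_{(\M,\varphi)}(x,y)$ is non-decreasing, and for fixed $x$ the map $y\mapsto\ell_{(\M,\varphi)}(x,y)$ is non-increasing. Since $\M$ is compact and locally connected, $\ell_{(\M,\varphi)}$ takes only finitely many values on any region bounded away from the diagonal, so all the sums below are finite and the infima defining $\mu$ are attained.

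First I would establish the discrete/integer-valued structure: the function $\ell_{(\M,\varphi)}$ is right-continuous in $x$ and left-continuous in $y$ (this follows from the fact that a connected component of a sublevel set $\M\langle\varphi\le t\rangle$ that meets $\M\langle\varphi\le s\rangle$ persists under small perturbations, together with compactness), and it is locally constant off a locally finite union of horizontal and vertical lines. Consequently, fixing $\bar y$ and letting $x$ decrease from $\bar x$ down to $-\infty$, the value $\ell_{(\M,\varphi)}(x,\bar y)$ drops by a total of $\ell_{(\M,\varphi)}(\bar x,\bar y)$ in finitely many jumps, each occurring at some abscissa $x_0\le\bar x$; and at each such $x_0$ the size of the jump can be refined, by further decreasing $y$ from $\bar y$, into contributions localized at finitely many ordinates $y_0$ with $y_0>\bar y$ (including possibly $y_0=\infty$). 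The combined horizontal-then-vertical second difference
$$
\ell_{(\M,\varphi)}(x_0+\eps,y_0-\eps)-\ell_{(\M,\varphi)}(x_0-\eps,y_0-\eps)-\ell_{(\M,\varphi)}(x_0+\eps,y_0+\eps)+\ell_{(\M,\varphi)}(x_0-\eps,y_0+\eps)
$$
stabilizes, for $\eps$ small enough, exactly to $\mu((x_0,y_0))$ when $y_0<\infty$, and the single difference $\ell_{(\M,\varphi)}(x_0+\eps,1/\eps)-\ell_{(\M,\varphi)}(x_0-\eps,1/\eps)$ stabilizes to $\mu(r)$ for the line $r:x=x_0$ when the jump escapes to infinity. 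The key bookkeeping identity is then a telescoping argument: summing these localized second differences over all $x_0\le\bar x$ and all $y_0$ with $\bar y<y_0\le\infty$ reconstructs $\ell_{(\M,\varphi)}(\bar x,\bar y)-\lim_{x\to-\infty}\ell_{(\M,\varphi)}(x,\bar y)$, and the latter limit is $0$ since $\M\langle\varphi\le x\rangle=\emptyset$ for $x$ sufficiently small.

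I would carry this out in two steps. Step one: prove a "vertical" decomposition, $\ell_{(\M,\varphi)}(\bar x,\bar y)=\sum_{\bar y<y\le\infty}\nu_{\bar x}(y)$, where $\nu_{\bar x}(y)$ is the left jump of $y\mapsto\ell_{(\M,\varphi)}(\bar x,y)$ at $y$ (and the $\infty$ term is $\lim_{y\to+\infty}\ell_{(\M,\varphi)}(\bar x,y)$, counting components of $\M$ meeting $\M\langle\varphi\le\bar x\rangle$). Step two: prove a "horizontal" decomposition of each $\nu_{\bar x}(y)$ as $\sum_{x\le\bar x}\bigl(\text{jump at }x\text{ of }x'\mapsto\nu_{x'}(y)\bigr)$, and check that this innermost jump equals $\mu((x,y))$ by unwinding the definition of $\mu$ and using the commuting of the two one-variable difference operators (both act on the finitely-valued step function $\ell_{(\M,\varphi)}$ near $(x,y)$). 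Combining the two steps gives the double sum in the statement.

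The main obstacle I anticipate is the rigorous justification of the semicontinuity and local-finiteness properties that make all these sums finite and the telescoping legitimate — in particular, showing that for fixed $\bar x,\bar y$ only finitely many points $(x,y)$ with $x\le\bar x$, $y>\bar y$ have $\mu((x,y))>0$, and that the contributions at these points exhaust $\ell_{(\M,\varphi)}(\bar x,\bar y)$ with nothing lost "in the limit" toward the diagonal or toward $y=\bar y$. This is where compactness and local connectedness of $\M$ (via Remark \ref{altdef} and \cite{FrLa97}) enter essentially; everything else is elementary combinatorics of monotone integer-valued functions of two variables.
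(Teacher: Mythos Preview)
The paper does not contain a proof of the Representation Theorem: it is stated as a known background result and attributed to \cite{FrLa01} (see the sentence preceding Corollary~\ref{CorDisc}). There is therefore no in-paper proof to compare your proposal against.

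That said, your telescoping/double-difference strategy is essentially the one used in the cited reference \cite{FrLa01}: one writes $\ell_{(\M,\varphi)}(\bar x,\bar y)$ as an iterated sum of second differences of $\ell_{(\M,\varphi)}$ over a suitable grid, identifies the limiting second difference at each point with the multiplicity $\mu$, and uses that $\ell_{(\M,\varphi)}(x,\bar y)\to 0$ as $x\to -\infty$ together with the boundedness of $\ell_{(\M,\varphi)}$ to handle the boundary terms. Your identification of the main technical point --- local finiteness of the cornerpoints in any region $\{x\le\bar x,\ y>\bar y\}$ bounded away from the diagonal, so that the telescoping sums are genuinely finite --- is exactly where the work lies, and this is indeed where compactness and local connectedness enter in \cite{FrLa01}.

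One small correction: the standard regularity statement for $1$-dimensional size functions is right-continuity in \emph{both} variables, not left-continuity in $y$ (see \cite{FrLa97,FrLa01}). This is compatible with the asymmetric index conditions $x\le\bar x$ and $\bar y<y$ in the sum, since right-continuity of a non-increasing function at $y_0$ means the value at $y_0$ already equals the limit from the right, so a cornerpoint with ordinate exactly $\bar y$ is not counted. Your argument goes through with this adjustment; just be careful when you formalize the ``vertical decomposition'' step.
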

\begin{remark}
In plain words, the Representation Theorem
\ref{RepresentationTheorem} claims that the value
$\ell_{(\M,\varphi)}(\bar x,\bar y)$ equals the number of
cornerpoints lying above and on the left of $(\bar x,\bar y )$. By
means of this theorem we are able to compactly represent
$1$-dimensional size functions as formal series of cornerpoints
and cornerlines (An example is given by Figure
\ref{DistMatch22}$(a)$ and Figure \ref{DistMatch22}$(b)$).
\end{remark}

As a first and simple consequence of the Representation Theorem
\ref{RepresentationTheorem}, we have the following result, that
will be useful in Section \ref{MainResults} (cf. \cite{FrLa01}):

\begin{cor}\label{CorDisc}
Each discontinuity point $(\bar x,\bar y)$ for
$\ell_{(\M,\varphi)}$ is such that either $\bar x$ is a
discontinuity point for $\ell_{(\M,\varphi)}(\cdot,\bar y)$, or
$\bar y$ is a discontinuity point for $\ell_{(\M,\varphi)}(\bar
x,\cdot)$, or both these conditions hold.
\end{cor}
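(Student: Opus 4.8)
The plan is to prove the contrapositive: if $(\bar x,\bar y)$ is a point of continuity of neither obstruction — that is, if $\bar x$ is a continuity point of $\ell_{(\M,\varphi)}(\cdot,\bar y)$ \emph{and} $\bar y$ is a continuity point of $\ell_{(\M,\varphi)}(\bar x,\cdot)$ — then $(\bar x,\bar y)$ is a continuity point of $\ell_{(\M,\varphi)}$ as a function of two variables. First I would invoke the Representation Theorem \ref{RepresentationTheorem}, which writes $\ell_{(\M,\varphi)}(\bar x,\bar y)$ as the number (with multiplicity) of cornerpoints $(x,y)$ with $x\le\bar x$ and $y>\bar y$. Since $\ell$ is integer-valued and, by Remark \ref{monotone}, monotone in each variable separately (non-decreasing in the first, non-increasing in the second), continuity at a point is equivalent to local constancy: the jump of $\ell$ across $(\bar x,\bar y)$ is governed entirely by the presence of cornerpoints on the vertical line $x=\bar x$ (for $y>\bar y$) and on the horizontal line $y=\bar y$ (for $x\le\bar x$), i.e. cornerpoints lying on the boundary of the region $\{x\le\bar x,\ y>\bar y\}$.

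The key step is the following dichotomy. Suppose $\ell_{(\M,\varphi)}$ is discontinuous at $(\bar x,\bar y)$. Using again that $\ell$ is integer-valued and separately monotone, there must be a cornerpoint $(x^*,y^*)$ (possibly at infinity) which is "picked up or dropped" by an arbitrarily small perturbation of $(\bar x,\bar y)$; concretely, $x^*=\bar x$ or $y^*=\bar y$. If $x^*=\bar x$, then a cornerpoint sits on the vertical line through $\bar x$ above height $\bar y$, and I would show this forces $\ell_{(\M,\varphi)}(\cdot,\bar y)$ to jump at $\bar x$: for $\varepsilon>0$ small, $\ell(\bar x+\varepsilon,\bar y)-\ell(\bar x-\varepsilon,\bar y)$ counts exactly the cornerpoints with first coordinate in $(\bar x-\varepsilon,\bar x+\varepsilon]$ and second coordinate $>\bar y$, which includes $(x^*,y^*)$ with positive multiplicity, and hence does not tend to $0$. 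Symmetrically, if $y^*=\bar y$, then $\ell_{(\M,\varphi)}(\bar x,\cdot)$ jumps at $\bar y$, since $\ell(\bar x,\bar y-\varepsilon)-\ell(\bar x,\bar y+\varepsilon)$ picks up the cornerpoints with $x\le\bar x$ and second coordinate in $(\bar y-\varepsilon,\bar y+\varepsilon]$. Either way one of the two slice functions is discontinuous, which is the desired contrapositive.

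The main obstacle I anticipate is handling the bookkeeping carefully enough to make the phrase "picked up or dropped by an arbitrarily small perturbation" precise, in particular: (i) making sure the argument covers cornerpoints at infinity (cornerlines) as well as proper cornerpoints — this is why it is convenient to adopt the paper's convention identifying a cornerline $x=a$ with the point $(a,\infty)$, so that the sum in Theorem \ref{RepresentationTheorem} is uniform; and (ii) using the finiteness/local-finiteness of the cornerpoint multiset so that for each point $(\bar x,\bar y)$ one can choose $\varepsilon$ small enough to isolate the offending cornerpoint from all the others, ensuring the difference quotients stabilize and the "only" source of discontinuity is a cornerpoint exactly on one of the two lines $x=\bar x$, $y=\bar y$. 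Once these two points are in hand, the proof is a direct application of the Representation Theorem together with the separate monotonicity of Remark \ref{monotone}; no further machinery is needed.
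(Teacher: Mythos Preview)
Your proposal is correct and follows exactly the route the paper indicates: the corollary is stated there without an explicit proof, merely as a simple consequence of the Representation Theorem~\ref{RepresentationTheorem} (with a reference to \cite{FrLa01}). Your argument---locating a cornerpoint on one of the lines $x=\bar x$ or $y=\bar y$ and using the finiteness of cornerpoints in the region $\{x\le\bar x,\ y>\bar y\}$ (which is guaranteed by the finite value $\ell_{(\M,\varphi)}(\bar x,\bar y)$ itself via Theorem~\ref{RepresentationTheorem})---is precisely how that consequence is unpacked.
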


We are now able to introduce the matching distance $d_{match}$.
Before going on, we observe that the Representation Theorem
$\ref{RepresentationTheorem}$ allows us to reduce the problem of
comparing $1$-dimensional size functions into the comparison of
the related multisets of cornerpoints. Indeed, the matching
distance $d_{match}$ can be seen as a measure of the cost of
transporting the cornerpoints of a $1$-dimensional size function
into the cornerpoints of another one, with respect to a functional
$\delta$ depending on the $L_{\infty}$-distance between two
matched cornerpoints and on their $L_{\infty}$-distance from the
diagonal $\{(x,y)\in\R^2:x=y\}$. An example of matching between
two formal series is given by Figure \ref{DistMatch22}$(c)$.

\begin{figure}
\psfrag{(A)}{{\large $(a)$}}\psfrag{(B)}{{\large
$(b)$}}\psfrag{(C)}{{\large $(c)$}}
\psfrag{x}{{\large$x$}}\psfrag{y}{\!{\large$y$}}\psfrag{optimal}{\!\!\!\!matching}\psfrag{matching}{}
%\psfrag{C1}{\!\!\!\!\!\! {\boldmath$\forall$} $Q\in\mathcal{M}$,\ {\boldmath$ \varphi$}$(Q)=d(P,Q)$}
\begin{center}
\includegraphics[width=\textwidth]{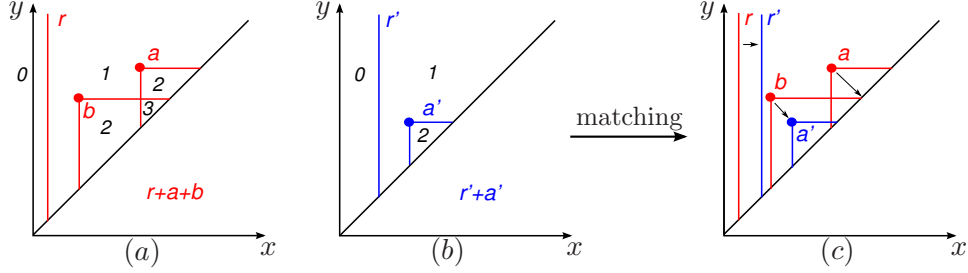}
\end{center}
\caption{$(a)$ Size function corresponding to the formal series
$r+a+b$. $(b)$ Size function corresponding to the formal series
$r'+a'$. $(c)$ The matching between the two formal series,
realizing the matching distance between the two size
functions.}\label{DistMatch22}
%$\mathcal{N}$ and the size functions $\ell_{(\mathcal{M},F_{(\vec
%l,\vec b)}^{\fr})},\ell_{(\mathcal{N},F_{(\vec l,\vec
%b)}^{\vec\psi})}$ associated with the half-plane $\pi_{(\vec l,\vec
%b)}$, for $\vec l=(\frac{\sqrt{2}}{2},\frac{\sqrt{2}}{2})$ and $\vec
%b=(0,0)$.}\label{figura:FigArt}
\end{figure}

Let us now define more formally the matching distance $d_{match}$.
Assume that two $1$-dimensional size functions $\ell_1$, $\ell_2$
are given. Consider the multiset $C_1$  (respectively $C_2$) of
cornerpoints for $\ell_1$ (resp. $\ell_2$), counted with their
multiplicities and augmented by adding the points of the diagonal
$\{(x,y)\in\R^2:x=y\}$ counted with infinite multiplicity. If we
denote by $\bar\Delta^*$ the set $\bar\Delta^+$ extended by the
points at infinity of the kind $(a,\infty)$, i.e.
$\bar\Delta^*=\bar\Delta^+\cup\{(a,\infty):a\in\R\}$, the matching
distance $d_{match}\left(\ell_1,\ell_2\right)$ is then defined as

$$
d_{match}\left(\ell_1,\ell_2\right)=\min_{\sigma}\max_{P\in C_1}\delta(P,\sigma(P)),
$$

where $\sigma$ varies among all the bijections between $C_1$ and $C_2$ and

$$
\delta((x,y),(x',y'))=\min\left\{\max\left\{|x-x'|,|y-y'|\right\},\max\left\{\frac{y-x}{2},\frac{y'-x'}{2}\right\}\right\},
$$
for every $(x,y)$, $(x',y')$ $\in\bar\Delta^*$ and with the
convention about $\infty$ that $\infty-y=y-\infty=\infty$ when
$y\neq\infty$, $\infty-\infty=0$, $\frac{\infty}{2}=\infty$,
$|\infty|=\infty$, $\min\{c,\infty\}=c$ and
$\max\{c,\infty\}=\infty$.

In plain words, the pseudometric $\delta$ measures the
pseudodistance between two points $(x,y)$ and $(x',y')$ as the
minimum between the cost of moving one point onto the other and
the cost of moving both points onto the diagonal, with respect to
the max-norm and under the assumption that any two points of the
diagonal have vanishing pseudodistance (we recall that a
pseudodistance $d$ is just a distance missing the condition
$d(X,Y)=0\Rightarrow X=Y$, i.e. two distinct elements may have
vanishing distance with respect to $d$).

An application of the matching distance is given by Figure
\ref{DistMatch22}$(c)$. As can be seen by this example, different
$1$-dimensional size functions may in general have a different
number of cornerpoints. Therefore $d_{match}$ allows a proper
cornerpoint to be matched to a point of the diagonal: this
matching can be interpreted as the destruction of a proper
cornerpoint.  Moreover, we stress that the matching distance is
stable with respect to perturbations of the measuring functions,
as the following Matching Stability Theorem states:

\begin{theorem}[Matching Stability Theorem]\label{StabilityTheorem}
If $(\M,\varphi)$, $(\M,\psi)$ are two size pairs with $\max_{P\in\M}|\varphi(P)-\psi(P)|\leq\varepsilon$, then it holds that
$d_{match}(\ell_{(\M,\varphi)},\ell_{(\M,\psi)})\leq\varepsilon$.
\end{theorem}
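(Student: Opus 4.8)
The plan is to reduce the multidimensional-looking statement to the genuinely one-dimensional situation (since both size pairs share the same space $\M$ and the measuring functions are real-valued here) and then to exploit the Representation Theorem~\ref{RepresentationTheorem} to convert a bound on $\ell_{(\M,\varphi)}$ and $\ell_{(\M,\psi)}$ into a bound on the positions of their cornerpoints. First I would fix $\varepsilon>0$ with $\max_{P\in\M}|\varphi(P)-\psi(P)|\le\varepsilon$, so that for every $t\in\R$ one has the sublevel-set inclusions $\M\langle\varphi\le t-\varepsilon\rangle\subseteq\M\langle\psi\le t\rangle\subseteq\M\langle\varphi\le t+\varepsilon\rangle$. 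Feeding these inclusions into Definition~\ref{secondDef} and using the monotonicity recorded in Remark~\ref{monotone}, I would deduce the sandwiching inequalities
$$
\ell_{(\M,\varphi)}(x-\varepsilon,y+\varepsilon)\le\ell_{(\M,\psi)}(x,y)\le\ell_{(\M,\varphi)}(x+\varepsilon,y-\varepsilon)
$$
for every $(x,y)\in\Delta^+$ with $x+\varepsilon<y-\varepsilon$, and the symmetric inequalities with the roles of $\varphi$ and $\psi$ exchanged. This is the core quantitative input; everything else is combinatorics on the plane.

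Next I would translate these pointwise inequalities into a statement about cornerpoints. Using the Representation Theorem~\ref{RepresentationTheorem}, the value $\ell(\bar x,\bar y)$ counts, with multiplicity, the cornerpoints in the quadrant $\{x\le\bar x,\ y>\bar y\}$. The sandwiching inequalities above then say, after taking the difference of two such quadrant counts, that the number of cornerpoints of $\ell_{(\M,\psi)}$ in any sufficiently small axis-parallel window is controlled by the number of cornerpoints of $\ell_{(\M,\varphi)}$ in a window of the same shape translated by at most $\varepsilon$ in each coordinate, and conversely. The standard way to package this is a Hall-type / marriage argument: I would show that for every finite sub-multiset of cornerpoints of $\ell_1:=\ell_{(\M,\varphi)}$ in a region bounded away from the diagonal, the cornerpoints of $\ell_2:=\ell_{(\M,\psi)}$ lying within $L_\infty$-distance $\varepsilon$ (or within $\varepsilon$ of the diagonal) are at least as numerous, and symmetrically; the defect version of Hall's theorem then produces a bijection $\sigma$ between the augmented multisets $C_1$ and $C_2$ with $\delta(P,\sigma(P))\le\varepsilon$ for all $P$. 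The role of the diagonal-with-infinite-multiplicity and of the second term $\max\{(y-x)/2,(y'-x')/2\}$ in the definition of $\delta$ is precisely to absorb cornerpoints that appear in one size function but not the other: a cornerpoint of $\ell_1$ within $L_\infty$-distance $\varepsilon$ of the diagonal has, vacuously, a partner on the diagonal at $\delta$-cost $\le\varepsilon$.

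I expect the main obstacle to be the bookkeeping with cornerpoints at infinity and with cornerpoints arbitrarily close to the diagonal, i.e. making the Hall-type matching argument fully rigorous when the multisets are infinite and when multiplicities must be tracked. Concretely, one must handle three cases uniformly: finite proper cornerpoints matched to finite proper cornerpoints, cornerlines matched to cornerlines, and cornerpoints of either kind matched to the diagonal. For the cornerlines this is where the convention $\infty-\infty=0$ and $\min\{c,\infty\}=c$ does its work, and one checks directly from the sandwiching inequalities evaluated along the lines $x=a\pm\varepsilon$, $y=1/\varepsilon$ that the multiplicities of the cornerlines of $\ell_1$ and $\ell_2$ are matched within horizontal distance $\varepsilon$. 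A clean way to organize the whole matching at once is to first establish it on each closed region $\{(x,y):y-x\ge\eta\}$ for $\eta>0$, where only finitely many cornerpoints lie, obtain a bijection there by finite Hall, send the remaining (finitely or infinitely many) cornerpoints within $\eta$ of the diagonal to the diagonal, and then let $\eta\to 0$; a compactness/limit argument on bijections (or a direct explicit construction) yields the global $\sigma$ with $\max_{P\in C_1}\delta(P,\sigma(P))\le\varepsilon$, which is the assertion.
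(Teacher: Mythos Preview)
The paper does not actually prove this theorem: immediately after the statement it simply refers the reader to \cite{dAFrLabis,dAFrLa} (and to \cite{CoEdHa07} for the persistent-homology analogue). So there is no in-paper argument to compare against; the Matching Stability Theorem is imported as a black box and then used repeatedly (in Theorem~\ref{ThNuovo}, Lemma~\ref{BoundDMatch}, Theorem~\ref{main}, etc.).

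That said, your sketch is essentially the strategy carried out in the cited references. The sandwiching inequalities
\[
\ell_{(\M,\varphi)}(x-\varepsilon,y+\varepsilon)\le\ell_{(\M,\psi)}(x,y)\le\ell_{(\M,\varphi)}(x+\varepsilon,y-\varepsilon)
\]
are exactly the quantitative input used there (and indeed Lemma~\ref{BoundDMatch} in the present paper is a one-line corollary of this), and the passage from quadrant counts via the Representation Theorem to a Hall-type matching is the standard route. The one place where your outline is a bit thin is the limiting step: producing a single global bijection $\sigma$ from a sequence of finite matchings on the regions $\{y-x\ge\eta\}$ requires either a diagonal/compactness argument on matchings or, as in \cite{dAFrLa}, an explicit ``transport'' construction; one cannot simply let $\eta\to 0$ and hope the matchings stabilize without some argument, since cornerpoints may accumulate toward the diagonal. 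This is routine but should be made explicit if you intend a self-contained proof.
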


For a proof of the previous theorem and more details about the
matching distance the reader is referred to
\cite{dAFrLabis,dAFrLa} (see also \cite{CoEdHa07} for the analogue
of the matching distance in Persistent Homology and its
stability).

\subsubsection{Coordinates of cornerpoints and discontinuity points}

Following the related literature (see also \cite{DiLa} for the
case of measuring functions with a finite number of critical
homological values), it can be easily deduced that, if finite,
both the coordinates of a cornerpoint for a $1$-dimensional size
function $\ell_{(\M,\varphi)}$ are critical values of the
measuring function $\varphi$, under the assumption that $\varphi$
is $C^1$. However, to the best of our knowledge, this result has
never been explicitly proved until now. Therefore, for the sake of
completeness we formalize here this statement, that will be used
in Section \ref{MainResults}:
\begin{theorem}\label{ThNuovo}
Let $\M$ be a closed $C^1$ Riemannian manifold, and let
$\varphi:\M\to\R$ be a $C^1$ measuring function. Then if $(\bar
x,\bar y)$ is a proper cornerpoint for $\ell_{(\M,\varphi)}$, it
follows that both $\bar x$ and $\bar y$ are critical values of
$\varphi$. If $(\bar x,\infty)$ is a cornerpoint at infinity for
$\ell_{(\M,\varphi)}$, it follows that $\bar x$ is a critical
value of $\varphi$.
\end{theorem}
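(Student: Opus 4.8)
The plan is to prove the statement by contraposition, that is, to show: if $\bar y$ is a regular value of $\varphi$ then $(\bar x,\bar y)$ is not a proper cornerpoint for $\ell_{(\M,\varphi)}$; and if $\bar x$ is a regular value then $(\bar x,\bar y)$ is not a proper cornerpoint and the line $r$ of equation $x=\bar x$ is not a cornerpoint at infinity. Inspecting the definitions of the multiplicities $\mu((\bar x,\bar y))$ and $\mu(r)$, everything reduces to two ``stability'' identities for the size function around a regular level $t_0$: there is $\delta>0$ such that, for every $0<\eps\le\delta$,
\[
\ell_{(\M,\varphi)}(x,t_0-\eps)=\ell_{(\M,\varphi)}(x,t_0+\eps)\quad\text{whenever}\quad x<t_0-\eps ,
\]
\[
\ell_{(\M,\varphi)}(t_0-\eps,y)=\ell_{(\M,\varphi)}(t_0+\eps,y)\quad\text{whenever}\quad y>t_0+\eps .
\]
Indeed, substituting the first identity (with $t_0=\bar y$ and $x=\bar x\pm\eps$, legitimate once $\eps<\min\{\delta,(\bar y-\bar x)/2\}$) into the four-term alternating sum defining $\mu((\bar x,\bar y))$ makes it vanish for all small $\eps$, hence $\mu((\bar x,\bar y))=0$; likewise the second identity (with $t_0=\bar x$ and $y=\bar y\pm\eps$) forces $\mu((\bar x,\bar y))=0$, and the second identity with $y=1/\eps$ (admissible once $\eps$ is small enough that $1/\eps>\bar x+\eps$) forces $\mu(r)=0$ for the line $r$ of equation $x=\bar x$.

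Both identities will be deduced from the following topological fact, a version of the first deformation lemma of Morse theory: \emph{if the closed interval $[a,b]$ contains no critical value of $\varphi$, then $\M\langle\varphi\le a\rangle$ is a strong deformation retract of $\M\langle\varphi\le b\rangle$, via a homotopy taking place inside $\M\langle\varphi\le b\rangle$.} First, since $\M$ is compact and $\varphi$ is $C^1$, the critical set of $\varphi$ is closed, so the set of critical values is compact, hence closed in $\R$; therefore a regular value $t_0$ admits $\delta>0$ with $[t_0-\delta,t_0+\delta]$ consisting of regular values, i.e. with $d\varphi$ nowhere zero on the compact band $N:=\varphi^{-1}([t_0-\delta,t_0+\delta])$. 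The delicate point — the step I expect to be the main obstacle — is that $\varphi$ is only $C^1$, so $\nabla\varphi$ is merely continuous and may fail to generate a flow. I would get around this with the usual pseudo-gradient construction: for each $p\in N$ pick $w_p\in T_p\M$ with $d\varphi_p(w_p)>0$ and, using a chart, extend $w_p$ to a smooth vector field $W_p$ on a neighbourhood $U_p$ of $p$ with $d\varphi(W_p)>0$ on $U_p$ (possible by continuity of $d\varphi$); cover the compact $N$ by finitely many $U_{p_i}$, take a subordinate smooth partition of unity $\{\rho_i\}$, and set $V:=\sum_i\rho_i W_{p_i}$, so that $V$ is smooth and $d\varphi(V)>0$ on $N$. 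After rescaling $V$ so that $d\varphi(V)\equiv 1$ on $N$ and multiplying by a cut-off that equals $1$ near $N$ and $0$ outside a slightly larger set, the flow $\Psi$ of $-V$ is complete (as $\M$ is compact) and satisfies $\varphi(\Psi_s(P))=\varphi(P)-s$ as long as $\Psi_s(P)$ stays in $N$; the standard truncation of the time parameter then produces, for every $0<\eps\le\delta$, the claimed strong deformation retraction of $\M\langle\varphi\le t_0+\eps\rangle$ onto $\M\langle\varphi\le t_0-\eps\rangle$, and in particular shows that the inclusion $\M\langle\varphi\le t_0-\eps\rangle\hookrightarrow\M\langle\varphi\le t_0+\eps\rangle$ is a homotopy equivalence.

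It then remains to translate this into the two identities by counting connected components, as in Remark \ref{altdef}. For the first identity, fix $t_0=\bar y$ regular, $\eps\le\delta$, and $x<t_0-\eps$, so that $\M\langle\varphi\le x\rangle\subseteq\M\langle\varphi\le t_0-\eps\rangle\subseteq\M\langle\varphi\le t_0+\eps\rangle$. The inclusion of the latter two spaces induces a bijection on connected components; since $\M\langle\varphi\le x\rangle$ lies in the smaller space, a direct check (using that the induced map on $\pi_0$ is a bijection) shows that this bijection restricts to a bijection between the components of $\M\langle\varphi\le t_0-\eps\rangle$ that meet $\M\langle\varphi\le x\rangle$ and those of $\M\langle\varphi\le t_0+\eps\rangle$ that meet $\M\langle\varphi\le x\rangle$; by Remark \ref{altdef} this is exactly $\ell_{(\M,\varphi)}(x,t_0-\eps)=\ell_{(\M,\varphi)}(x,t_0+\eps)$. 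For the second identity, fix $t_0=\bar x$ regular, $\eps\le\delta$, and $y>t_0+\eps$; now $\M\langle\varphi\le y\rangle$ is the common ambient space, and the deformation above moves each point of $\M\langle\varphi\le t_0+\eps\rangle$ to a point of $\M\langle\varphi\le t_0-\eps\rangle$ along a path contained in $\M\langle\varphi\le t_0+\eps\rangle\subseteq\M\langle\varphi\le y\rangle$, so a connected component of $\M\langle\varphi\le y\rangle$ meets $\M\langle\varphi\le t_0+\eps\rangle$ if and only if it meets $\M\langle\varphi\le t_0-\eps\rangle$; by Remark \ref{altdef} this reads $\ell_{(\M,\varphi)}(t_0-\eps,y)=\ell_{(\M,\varphi)}(t_0+\eps,y)$, which completes the reduction and hence, by the first paragraph, the proof.
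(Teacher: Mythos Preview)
Your proof is correct and takes a genuinely different route from the paper's. The paper proceeds by approximation: it first passes to a $C^\infty$ manifold via a $C^1$-diffeomorphism, then approximates $\varphi$ in the $C^1$ topology by Morse functions $\varphi_\eps$, invokes the already-known Morse case (cited from \cite{Fr96}) together with the Matching Stability Theorem~\ref{StabilityTheorem} to obtain cornerpoints $(\bar x_\eps,\bar y_\eps)\to(\bar x,\bar y)$ whose coordinates are critical values of $\varphi_\eps$, and finally passes to the limit using the $C^1$-closeness of the gradients. Your argument is instead a direct, self-contained deformation lemma at a regular level, with no approximation, no stability theorem, and no appeal to the Morse case. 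This makes your proof more elementary and logically independent of the rest of the paper; conversely, the paper's approximation-plus-stability scheme is exactly the template reused later in Theorem~\ref{main} and Theorem~\ref{Teoremone2}, so their version also functions as a rehearsal for those arguments.

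One minor technical slip: after constructing the smooth pseudo-gradient $V$ with $d\varphi(V)>0$ on $N$, you rescale to arrange $d\varphi(V)\equiv 1$. Since $\varphi$ is only $C^1$, the scalar $d\varphi(V)$ is merely continuous, so $V/d\varphi(V)$ is only $C^0$ and its flow need not be unique. The easy fix is to skip the normalization: keep the smooth $V$ (whose flow $\Psi$ is then well defined and smooth), use that $d\varphi(V)\ge c>0$ on the compact band $N$, and build the retraction via the hitting time $\tau(P)=\inf\{s\ge 0:\varphi(\Psi_s(P))\le t_0-\eps\}$, which is finite (bounded by $2\eps/c$) and continuous by the implicit function theorem applied to the $C^1$ map $(P,s)\mapsto\varphi(\Psi_s(P))$. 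Everything else in your argument then goes through verbatim.
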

\begin{proof}
We confine ourselves to prove the former statement, since the
proof of the latter is analogous.

First of all, let us remark that there exists a closed
$C^{\infty}$ Riemannian manifold $\widetilde\M$ that is
$C^1$-diffeomorphic to $\M$ through a $C^1$-diffeomorphism
$h:\widetilde\M\to\M$ (cf. \cite[Thm. 2.9]{Hi76}). Set
$\tilde\varphi=\varphi\circ h$. Obviously, the size functions
associated with the size pairs $(\widetilde\M,\tilde\varphi)$ and
$(\M,\varphi)$ coincide. Therefore, $(\bar x,\bar y)$ is also a
cornerpoint for $\ell_{(\widetilde\M,\tilde\varphi)}$.

We observe that the claim of our theorem holds for a closed
$C^{\infty}$ Riemannian manifold endowed with a Morse measuring
function (see \cite[Thm. 2.2]{Fr96}). Now, for every real value
$\varepsilon>0$ it is possible to find a Morse measuring function
$\varphi_{\varepsilon}:\widetilde{\M}\to\R$ such that
$\max_{Q\in\widetilde{\M}}|\tilde\varphi(Q)-\varphi_{\varepsilon}(Q)|\leq\varepsilon$
and
$\max_{Q\in\widetilde{\M}}\left\|\nabla\tilde\varphi(Q)-\nabla\varphi_{\varepsilon}(Q)\right\|\leq\varepsilon$:
We can obtain $\varphi_{\varepsilon}$ by considering first the
smooth measuring function given by the convolution of
$\tilde\varphi$ and an opportune ``regularizing'' function, and
then a Morse measuring function $\varphi_{\varepsilon}$
approximating in $C^1(\widetilde\M,\R)$ the previous measuring
function (cf. \cite[Corollary 6.8]{Mi63}). Therefore, from the
Matching Stability Theorem \ref{StabilityTheorem} it follows that
for every $\varepsilon>0$ we can find a cornerpoint $(\bar
x_{\varepsilon},\bar y_{\varepsilon})$ for the size function
$\ell_{(\widetilde{\M},\varphi_{\varepsilon})}$ with $\left\|(\bar
x,\bar y)-(\bar x_{\varepsilon},\bar
y_{\varepsilon})\right\|_{\infty}\leq\varepsilon$ and $\bar
x_{\varepsilon},\bar y_{\varepsilon}$ as critical values for
$\varphi_{\varepsilon}$. Passing to the limit for $\varepsilon\to
0$ we obtain that both $\bar x$ and $\bar y$ are critical values
for $\tilde\varphi$. The claim follows by observing that, since
$\tilde\varphi$ and $\varphi$ have the same critical values, both
$\bar x$ and $\bar y$ are also critical values for $\varphi$.
\end{proof}

From the Representation Theorem \ref{RepresentationTheorem} and
Theorem \ref{ThNuovo} we can obtain the following corollary,
refining Corollary \ref{CorDisc} in the $C^1$ case (we skip the
easy proof):

\begin{cor}\label{DiscR}
Let $\M$ be a closed $C^1$ Riemannian manifold, and let $\varphi:\M\to\R$ be
a $C^1$ measuring function. Let also $(\bar x,\bar y)$ be a
discontinuity point for $\ell_{(\M,\varphi)}$.
Then at least one of the following statements holds:
\begin{description}
  \item[(i)] $\bar x$
is a discontinuity point for $\ell_{(\M,\varphi)}(\cdot,\bar y)$ and
$\bar x$ is a critical value for $\varphi$;
  \item[(ii)] $\bar y$
is a discontinuity point for $\ell_{(\M,\varphi)}(\bar
x,\cdot)$ and $\bar y$ is a critical value for $\varphi$.
\end{description}
\end{cor}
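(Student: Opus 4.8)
The plan is to reduce the $2$-variable discontinuity to the one-variable behaviour of $\ell_{(\M,\varphi)}$ along the two axes and then invoke Theorem~\ref{ThNuovo}. First I would apply Corollary~\ref{CorDisc}: since $(\bar x,\bar y)$ is a discontinuity point of $\ell_{(\M,\varphi)}$, at least one of the partial functions $\ell_{(\M,\varphi)}(\cdot,\bar y)$ and $\ell_{(\M,\varphi)}(\bar x,\cdot)$ is discontinuous, at $\bar x$ and at $\bar y$ respectively. Hence it is enough to establish the two implications ``$\bar x$ is a discontinuity point of $\ell_{(\M,\varphi)}(\cdot,\bar y)$ $\Rightarrow$ $\bar x$ is a critical value of $\varphi$'' and its mirror image for $\bar y$; alternatives (i) and (ii) then correspond precisely to the two cases produced by Corollary~\ref{CorDisc}.

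To prove the first implication I would use the Representation Theorem~\ref{RepresentationTheorem} to write $\ell_{(\M,\varphi)}(x,\bar y)=\sum_{x'\le x,\ \bar y<y\le\infty}\mu((x',y))$ (note that $\bar y$ is finite, since $(\bar x,\bar y)\in\Delta^+$), which by Remark~\ref{monotone} and Remark~\ref{altdef} is a non-decreasing and everywhere finite function of $x$. Finiteness forces only finitely many cornerpoints to occur in any strip $\{(x',y):x'\le c,\ \bar y<y\}$ with $\bar x<c<\bar y$; consequently $\ell_{(\M,\varphi)}(\cdot,\bar y)$ is constant on a right neighbourhood of $\bar x$, and its left limit at $\bar x$ equals $\ell_{(\M,\varphi)}(\bar x,\bar y)-\sum_{\bar y<y\le\infty}\mu((\bar x,y))$. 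Thus a jump of $\ell_{(\M,\varphi)}(\cdot,\bar y)$ at $\bar x$ forces $\sum_{\bar y<y\le\infty}\mu((\bar x,y))>0$, i.e. there is a cornerpoint, proper or at infinity, with abscissa $\bar x$; Theorem~\ref{ThNuovo} then guarantees that $\bar x$ is a critical value of $\varphi$.

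The treatment of $\bar y$ is symmetric: writing $\ell_{(\M,\varphi)}(\bar x,y)=\sum_{x'\le\bar x,\ y<y''\le\infty}\mu((x',y''))$, a non-increasing finite function of $y$, the same local-finiteness remark shows it is right-continuous at $\bar y$, so a jump forces $\sum_{x'\le\bar x}\mu((x',\bar y))>0$; since $\bar y$ is finite, such a cornerpoint is proper, and Theorem~\ref{ThNuovo} makes $\bar y$ a critical value of $\varphi$. Combining the two implications with the dichotomy of Corollary~\ref{CorDisc} finishes the argument. I do not expect a genuine obstacle here; the only point that needs a line of care is the one-sided continuity of the partial size functions, which follows at once from the finiteness of $\ell_{(\M,\varphi)}$ (equivalently, the local finiteness of the associated multiset of cornerpoints) — which is why the proof can reasonably be called ``easy''.
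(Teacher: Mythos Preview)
Your proposal is correct and follows exactly the approach the paper indicates: the authors state that Corollary~\ref{DiscR} follows from the Representation Theorem~\ref{RepresentationTheorem} and Theorem~\ref{ThNuovo}, refining Corollary~\ref{CorDisc}, and then skip the easy proof. Your sketch unpacks precisely this route---first invoking Corollary~\ref{CorDisc} to reduce to a one-variable discontinuity, then using the Representation Theorem (together with the finiteness of $\ell_{(\M,\varphi)}$, hence local finiteness of cornerpoints above the line $y=\bar y$) to produce a cornerpoint with abscissa $\bar x$ (resp.\ ordinate $\bar y$), and finally applying Theorem~\ref{ThNuovo}---so there is nothing to add.
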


%We shall use the previous theorem in Section \ref{MainResults} in
%the case of a $C^1$ measuring function, namely for
%$\mathbb{K}=\emptyset$.

The generalization of Corollary \ref{DiscR} in the $k$-dimensional
setting is not so simple and requires some new ideas which are
given in Section \ref{MainResults}, which also provides our main
results.

\subsection{Reduction to the 1-dimensional case}\label{Reduction}

We are now ready to review the approach to multidimensional Size
Theory proposed in \cite{BiCeFrGiLa08}. In that work, the authors
prove that the case $k>1$ can be reduced to the $1$-dimensional
framework by a change of variable and the use of a suitable
foliation. In particular, they show that there exists a
parameterized family of half-planes in $\R^k\times\R^k$ such that
the restriction of a $k$-dimensional size function
$\ell_{(\mathcal{M},\fr)}$ to each of these half-planes can be
seen as a particular $1$-dimensional size function. The
motivations at the basis of this approach move from the fact that
the concepts of proper cornerpoint and cornerpoint at infinity,
defined for $1$-dimensional size functions, appear not easily
generalizable to an arbitrary dimension (namely the case $k>1$).
%the generalization in arbitrary dimension (namely the case k>1) of the concepts of cornerpoint and cornerline, defined for $1$-dimensional size %functions, seem not to be trivial.
As a consequence, at a first glance it does not seem possible to obtain
the multidimensional analogue of the matching distance $d_{match}$
and therefore it is not clear how to generalize the Matching Stability Theorem
\ref{StabilityTheorem}. On the other hand, all
these problems can be bypassed by means of the results we recall
in the rest of this subsection.

\begin{definition}\label{Adm}
\label{np} For every unit vector $\vec{l}=(l_1,\ldots,l_k)$ of
$\mathbb{R}^k$ such that $l_i>0$ for $i=1,\dots,k$, and for every
vector $\vec{b}=(b_1,\ldots,b_k)$ of $\mathbb{R}^k$ such that
$\sum_{i=1}^k b_i=0$, we shall say that the pair $(\vec{l},\vec{b})$
is \emph{admissible}. We shall denote the set of all admissible
pairs in $\R^k\times\R^k$ by $Adm_k$. Given an admissible pair
$(\vec{l},\vec{b})$, we define the half-plane
$\pi_{(\vec{l},\vec{b})}$ of $\R^k\times\R^k$ by the following
parametric equations:
$$
\left\{%
\begin{array}{ll}
    \vec x=s\vec l + \vec b\\
    \vec y=t\vec l + \vec b\\
\end{array}%
\right.
$$
for $s,t\in \R$, with $s<t$.
\end{definition}

The following proposition implies that the collection of
half-planes given in Definition \ref{Adm} is actually a foliation
of $\Delta^+$.

\begin{prop}\label{Foliazione}
For every $(\vec{x},\vec{y})\in \Delta^+$ there exists one and
only one admissible pair $(\vec{l},\vec{b})$ such that
$(\vec{x},\vec{y})\in \pi_{(\vec{l},\vec{b})}$.
\end{prop}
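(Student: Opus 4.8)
The plan is to prove both existence and uniqueness of the admissible pair $(\vec l,\vec b)$ by directly solving the defining parametric equations. Given $(\vec x,\vec y)\in\Delta^+$, i.e. $x_i<y_i$ for all $i$, we seek a unit vector $\vec l$ with all positive components, a vector $\vec b$ with $\sum_i b_i=0$, and scalars $s<t$ such that $\vec x=s\vec l+\vec b$ and $\vec y=t\vec l+\vec b$. Subtracting the two vector equations gives $\vec y-\vec x=(t-s)\vec l$, so $\vec l$ must be the positive multiple of $\vec y-\vec x$ that has unit $\|\cdot\|_\infty$-norm; since $y_i-x_i>0$ for every $i$, this forces
\[
\vec l=\frac{\vec y-\vec x}{\|\vec y-\vec x\|_\infty},\qquad t-s=\|\vec y-\vec x\|_\infty>0,
\]
which already shows $\vec l$ is uniquely determined, has all positive components, and that necessarily $s<t$.

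Next I would pin down $s$ (hence $t$) and $\vec b$ using the remaining constraint $\sum_i b_i=0$. From $\vec b=\vec x-s\vec l$ we get $\sum_i b_i=\sum_i x_i-s\sum_i l_i$, and since every $l_i>0$ we have $\sum_i l_i>0$, so the equation $\sum_i b_i=0$ has the unique solution
\[
s=\frac{\sum_{i=1}^k x_i}{\sum_{i=1}^k l_i},\qquad t=s+\|\vec y-\vec x\|_\infty.
\]
Then $\vec b:=\vec x-s\vec l$ is forced, and one checks directly that $\sum_i b_i=0$ by construction, that $\vec x=s\vec l+\vec b$ trivially, and that $\vec y=t\vec l+\vec b=\vec x+(t-s)\vec l=\vec x+(\vec y-\vec x)=\vec y$. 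This establishes existence; uniqueness follows because at each stage ($\vec l$, then $t-s$, then $s$, then $\vec b$) the value was the only one compatible with the constraints.

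I do not expect any serious obstacle here: the argument is essentially linear algebra, and the only points requiring a word of care are that $\vec y-\vec x$ has strictly positive entries (so normalizing it yields an admissible direction with the correct sign, and automatically $s<t$) and that $\sum_i l_i\neq 0$ (so $s$ is well-defined), both of which are immediate from $l_i>0$. The mild subtlety worth spelling out is that admissibility of $(\vec l,\vec b)$ is not merely a convenient normalization but is exactly what makes the pair unique: without the unit-norm condition on $\vec l$ one could rescale $\vec l$ and correspondingly rescale $s,t$, and without $\sum_i b_i=0$ one could translate along $\vec l$; imposing both removes precisely these two degrees of freedom. So in the write-up I would present the formulas for $\vec l$, $s$, $t$, $\vec b$ explicitly, verify they satisfy all the requirements, and then remark that each was the unique choice, concluding that $\pi_{(\vec l,\vec b)}$ is the one and only leaf through $(\vec x,\vec y)$.
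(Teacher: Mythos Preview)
Your argument is correct and is essentially the natural (and only reasonable) way to prove this: subtract to recover $\vec l$ up to scale, normalize, then use $\sum_i b_i=0$ to pin down $s$ and hence $\vec b$. The paper itself does not supply a proof here---Proposition~\ref{Foliazione} is quoted from \cite{BiCeFrGiLa08} as part of the background material---so there is nothing to compare against, but your write-up would serve perfectly well as the omitted proof.

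One small point to tidy up: you normalize $\vec l$ using the $\max$-norm, writing $\vec l=(\vec y-\vec x)/\|\vec y-\vec x\|_\infty$. The paper's phrase ``unit vector'' in Definition~\ref{Adm} almost certainly means Euclidean unit vector---see Example~\ref{Example}, where for $k=2$ the admissible directions are parametrized as $\vec l=(\cos\theta,\sin\theta)$. This does not affect your argument in any way (just replace $\|\cdot\|_\infty$ by the Euclidean norm $\|\cdot\|_2$ throughout; positivity of the $l_i$ and of $\sum_i l_i$ still hold), but you should match the paper's convention in the final version.
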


Now we can show the reduction to the $1$-dimensional case.

\begin{theorem}[Reduction Theorem]\label{reduction}
Let $(\vec{l},\vec{b})$ be an admissible pair, and $F_{(\vec l,\vec
b)}^{\fr}:\mathcal{M}\rightarrow\R$ be defined by setting
$$
F_{(\vec l,\vec
b)}^{\fr}(P)=\max_{i=1,\dots,k}\left\{\frac{\varphi_i(P)-b_i}{l_i}\right\}\
.
$$
Then, for every $(\vec x,\vec y)=(s\vec l+\vec b,t\vec l + \vec
b)\in\pi_{(\vec{l},\vec{b})}$ the following equality holds:
$$
\ell_{(\mathcal{M},\fr)}(\vec x,\vec y)=\ell_{(\mathcal{M},F_{(\vec
l,\vec b)}^{\fr})}(s,t)\ .
$$
\end{theorem}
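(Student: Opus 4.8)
The plan is to establish the claimed equality pointwise by showing that, for a fixed admissible pair $(\vec l,\vec b)$ and a fixed $(\vec x,\vec y)=(s\vec l+\vec b,\,t\vec l+\vec b)$, the sublevel sets appearing in the two size functions literally coincide, namely
$$
\mathcal{M}\langle\fr\preceq\vec x\,\rangle=\mathcal{M}\langle F_{(\vec l,\vec b)}^{\fr}\le s\rangle\qquad\text{and}\qquad \mathcal{M}\langle\fr\preceq\vec y\,\rangle=\mathcal{M}\langle F_{(\vec l,\vec b)}^{\fr}\le t\rangle.
$$
Once these two set identities are in hand, Definition \ref{secondDef} immediately gives the result: the number of equivalence classes into which $\mathcal{M}\langle\fr\preceq\vec x\,\rangle$ is split by the $\langle\fr\preceq\vec y\,\rangle$-connectedness relation is the same as the number of classes into which $\mathcal{M}\langle F_{(\vec l,\vec b)}^{\fr}\le s\rangle$ is split by the $\langle F_{(\vec l,\vec b)}^{\fr}\le t\rangle$-connectedness relation, because connectedness of a subset of $\mathcal{M}$ is an intrinsic topological notion that does not care which function cut out the ambient sublevel set.

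**First I would** verify the first set identity. For $P\in\mathcal{M}$, the condition $P\in\mathcal{M}\langle\fr\preceq\vec x\,\rangle$ means $\varphi_i(P)\le x_i=s\,l_i+b_i$ for every $i=1,\dots,k$. Since each $l_i>0$, this is equivalent to $\frac{\varphi_i(P)-b_i}{l_i}\le s$ for every $i$, which in turn is equivalent to $\max_{i=1,\dots,k}\frac{\varphi_i(P)-b_i}{l_i}\le s$, i.e. $F_{(\vec l,\vec b)}^{\fr}(P)\le s$. The positivity of the components $l_i$, guaranteed by the admissibility of $(\vec l,\vec b)$ (Definition \ref{Adm}), is exactly what makes the division preserve the inequalities; this is the one hypothesis that must be invoked. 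The second identity is obtained verbatim by replacing $s$ with $t$ and $\vec x$ with $\vec y$.

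**The remaining step** is to note that $F_{(\vec l,\vec b)}^{\fr}$ is a legitimate $1$-dimensional measuring function on $\mathcal{M}$, i.e. that it is continuous: it is the finite maximum of the functions $P\mapsto\frac{\varphi_i(P)-b_i}{l_i}$, each of which is continuous because $\fr$ is continuous and $l_i\ne 0$, and a finite maximum of continuous real-valued functions is continuous. Hence $(\mathcal{M},F_{(\vec l,\vec b)}^{\fr})$ is a size pair in the sense of Section \ref{BasicResult} and $\ell_{(\mathcal{M},F_{(\vec l,\vec b)}^{\fr})}$ is well defined; moreover $s<t$ ensures $(s,t)$ lies in the domain of the $1$-dimensional size function, consistently with $(\vec x,\vec y)\in\pi_{(\vec l,\vec b)}\subset\Delta^+$.

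**I do not expect a serious obstacle here:** the statement is essentially a bookkeeping identity, and the only thing one genuinely uses is the sign condition $l_i>0$ built into the definition of an admissible pair. If anything deserves a word of care, it is spelling out why equality of the two ambient sublevel sets forces equality of the two counts of connected components — that is, making explicit that Definition \ref{firstDef} and Remark \ref{altdef} depend on $\mathcal{M}\langle\fr\preceq\vec y\,\rangle$ (respectively $\mathcal{M}\langle F_{(\vec l,\vec b)}^{\fr}\le t\rangle$) only as a topological subspace of $\mathcal{M}$, and likewise for the role of $\mathcal{M}\langle\fr\preceq\vec x\,\rangle$ as the set of points whose classes are being counted. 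Once that is said, the proof is complete.
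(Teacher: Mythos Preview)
Your argument is correct: the equality of sublevel sets $\mathcal{M}\langle\fr\preceq s\vec l+\vec b\,\rangle=\mathcal{M}\langle F_{(\vec l,\vec b)}^{\fr}\le s\rangle$ follows at once from the positivity of each $l_i$, and Definition \ref{secondDef} then gives the identity of size functions. Note, however, that the paper does not supply its own proof of this statement; the Reduction Theorem is only \emph{recalled} here from \cite{BiCeFrGiLa08}, so there is no in-paper proof to compare against. Your proof is the standard one and matches what one finds in that reference.
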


In the following, we shall use the symbol $F_{(\vec l,\vec
b)}^{\fr}$ in the sense of the Reduction Theorem \ref{reduction}.

\begin{remark}\label{Descriptor}
In plain words, the Reduction Theorem \ref{reduction} states that
each multidimensional size function corresponds to a
$1$-dimensional size function on each half-plane of the given
foliation. It follows that each multidimensional size function can
be represented as a parameterized family of formal series of
points and lines, following the description introduced in
Subsection \ref{PartCase1} for the case $k=1$. Indeed, it is
possible to associate a formal series $\sigma_{(\vec l,\vec b)}$
with each admissible pair $(\vec l,\vec b)$, with $\sigma_{(\vec
l,\vec b)}$ describing the $1$-dimensional size function
$\ell_{(\mathcal{M},F_{(\vec l,\vec b)}^{\fr})}$. Therefore, on
each half-plane $\pi_{(\vec l,\vec b)}$, the matching distance
$d_{match}$ and the Matching Stability Theorem
\ref{StabilityTheorem} can be applied. Moreover, the family
$\left\{\sigma_{(\vec l,\vec b)}:(\vec l,\vec b)\in Adm_k\right\}$
turns out to be a complete descriptor for $\ell_{(\mathcal{M},\vec
\varphi)}$, since two multidimensional size functions coincide if
and only if the corresponding parameterized families of formal
series coincide.
\end{remark}

Before proceeding, we now introduce an example showing how the
Reduction Theorem \ref{reduction} works.

\begin{example}\label{Example}

In $\R^3$ consider the set
$\mathcal{Q}=[-1,1]\times[-1,1]\times[-1,1]$ and the unit sphere
$S^2$ of equation $x^2+y^2+z^2=1$. Let also
$\vec\Phi=(\Phi_1,\Phi_2):\R^3\rightarrow\R^2$ be the continuous
function, defined as $\vec\Phi(x,y,z)=(|x|,|z|)$. In this setting,
consider the size pairs $(\mathcal{M},\fr)$ and
$(\mathcal{N},\vec{\psi})$ where
$\mathcal{M}=\partial\mathcal{Q}$, $\mathcal{N}=S^2$, and $\fr$
and $\vec\psi$ are respectively the restrictions of $\vec\Phi$ to
$\mathcal{M}$ and $\mathcal{N}$. In order to compare the size
functions $\ell_{(\mathcal{M},\fr)}$ and
$\ell_{(\mathcal{N},\vec\psi)}$, we are interested in studying the
foliation in half-planes $\pi_{(\vec l,\vec b)}$, where $\vec
l=(\cos\theta,\sin\theta)$ with $\theta\in(0,\frac{\pi}{2})$, and
$\vec b=(a,-a)$ with $a\in\R$. Any such half-plane is represented
by
$$
\left\{%
\begin{array}{ll}
    x_1=s\cos\theta + a\\
    x_2=s\sin\theta - a\\
    y_1=t\cos\theta + a\\
    y_2=t\sin\theta - a\\
\end{array}%
\right.\ ,
$$
with $s,t\in\R$, $s<t$. Figure \ref{figura:FigArt} shows the size
functions $\ell_{(\mathcal{M},F_{(\vec l,\vec b)}^{\fr})}$ and
$\ell_{(\mathcal{N},F_{(\vec l,\vec b)}^{\vec\psi})}$, for
$\theta=\frac{\pi}{4}$ and $a=0$, i.e. $\vec
l=\left(\frac{\sqrt{2}}{2},\frac{\sqrt{2}}{2}\right)$ and $\vec
b=(0,0)$.
\begin{figure}[h]
\begin{center}
\psfrag{R2}{\!\!\!{\small$\!\!t=\mathbf{\sqrt{2}}$}}
\psfrag{R1}{\!\!{\small$t=\mathbf{1}$}} \psfrag{M}{$\mathcal{M}$}
\psfrag{N}{$\mathcal{N}$} \psfrag{C}{$\ell_{(\mathcal{M},F_{(\vec
l,\vec b)}^{\fr})}$} \psfrag{D}{\ $\ell_{(\mathcal{N},F_{(\vec
l,\vec b)}^{\vec\psi})}$}\psfrag{x}{$x$}\psfrag{y}{$y$}\psfrag{z}{$z$}
\psfrag{s}{$s$}\psfrag{t}{$t$}
\begin{tabular}{c}
\includegraphics[width=\textwidth]{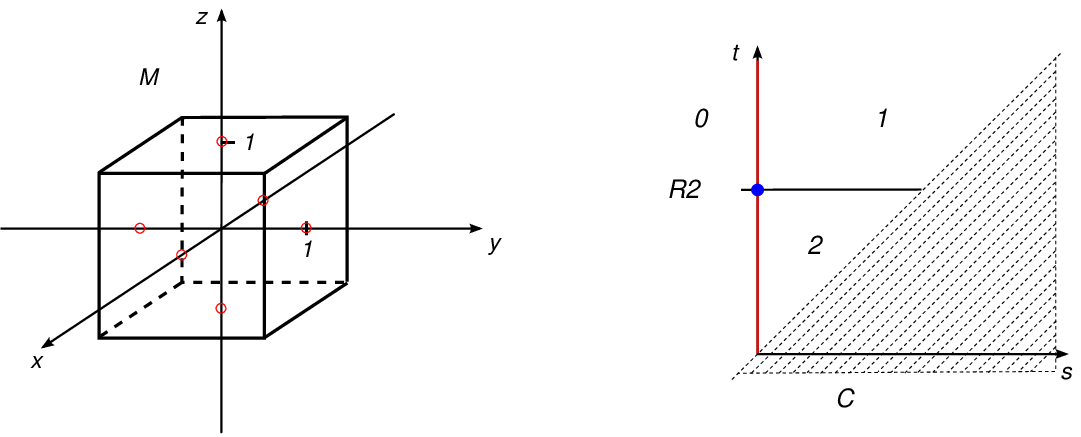}\\
\includegraphics[width=\textwidth]{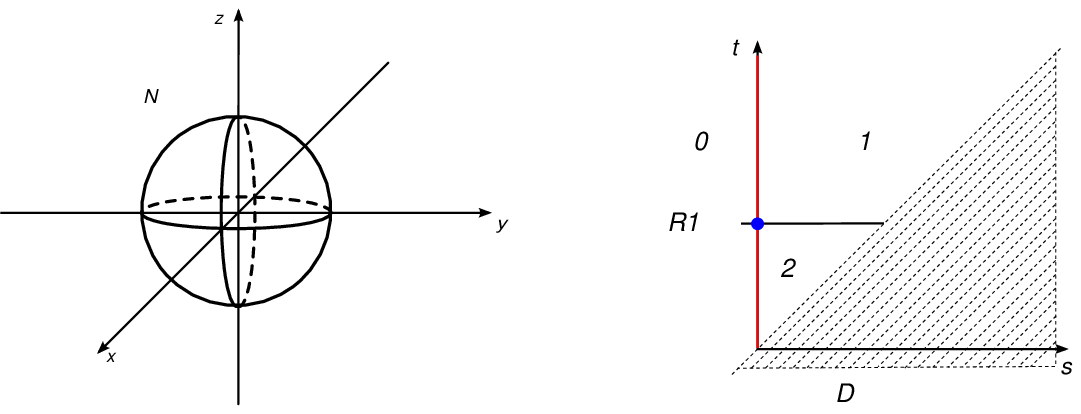}
\end{tabular}
\caption{The topological spaces $\mathcal{M}$ and $\mathcal{N}$
and the size functions $\ell_{(\mathcal{M},F_{(\vec l,\vec
b)}^{\fr})},\ell_{(\mathcal{N},F_{(\vec l,\vec b)}^{\vec\psi})}$
associated with the half-plane $\pi_{(\vec l,\vec b)}$, for $\vec
l=(\frac{\sqrt{2}}{2},\frac{\sqrt{2}}{2})$ and $\vec b=(0,0)$.}
\label{figura:FigArt}
\end{center}
\end{figure}
With this choice, we obtain that $F_{(\vec l,\vec
b)}^{\fr}=\sqrt{2}\max\{\varphi_1,\varphi_2\}=\sqrt{2}\max\{|x|,|z|\}$
and $F_{(\vec l,\vec
b)}^{\vec\psi}=\sqrt{2}\max\{\psi_1,\psi_2\}=\sqrt{2}\max\{|x|,|z|\}$.
Therefore, Theorem \ref{reduction} implies that, for every
$(x_1,x_2,y_1,y_2)\in\pi_{(\vec l,\vec b)}$, we have
\begin{eqnarray*}
\ell_{(\mathcal{M},\fr)}(x_1,x_2,y_1,y_2)=\ell_{(\mathcal{M},\fr)}\left(\frac{s}{\sqrt{2}},\frac{s}{\sqrt{2}},\frac{t}{\sqrt{2}},\frac{t}{\sqrt{2}}\right)=\ell_{(\mathcal{M},F_{(\vec l,\vec b)}^{\fr})}(s,t)\\
\ell_{(\mathcal{N},\vec\psi)}(x_1,x_2,y_1,y_2)=\ell_{(\mathcal{N},\vec\psi)}\left(\frac{s}{\sqrt{2}},\frac{s}{\sqrt{2}},\frac{t}{\sqrt{2}},\frac{t}{\sqrt{2}}\right)=\ell_{(\mathcal{N},F_{(\vec
l,\vec b)}^{\vec\psi})}(s,t)\ .
\end{eqnarray*}
The matching distance $d_{match}(\ell_{(\mathcal{M},F_{(\vec
l,\vec b)}^{\fr})},\ell_{(\mathcal{N},F_{(\vec l,\vec
b)}^{\vec\psi})})$ is equal to $\sqrt{2}-1$, i.e. the cost of
moving the point of coordinates $(0,\sqrt{2})$ onto the point of
coordinates $(0,1)$, computed with respect to the $\max$-norm. The
points $(0,\sqrt{2})$ and $(0,1)$ are representative of the
characteristic triangles of the size functions
$\ell_{(\mathcal{M},F_{(\vec l,\vec b)}^{\fr})}$ and
$\ell_{(\mathcal{N},F_{(\vec l,\vec b)}^{\vec\psi})}$,
respectively. Note that the matching distance
computed for $\vec
l=\left(\frac{\sqrt{2}}{2},\frac{\sqrt{2}}{2}\right)$ and $\vec
b=(0,0)$ induces a pseudodistance. This means that, even by
considering just one half-plane of the foliation, it is possible
to effectively compare multidimensional size functions. We
conclude by observing that
$\ell_{(\mathcal{M},\varphi_1)}\equiv\ell_{(\mathcal{N},\psi_1)}$
and
$\ell_{(\mathcal{M},\varphi_2)}\equiv\ell_{(\mathcal{N},\psi_2)}$.
In other words, the multidimensional size functions, with respect
to $\fr,\vec\psi$, are able to discriminate the cube and the
sphere, while both the $1$-dimensional size functions, with
respect to $\varphi_1,\varphi_2$ and $\psi_1,\psi_2$, cannot do
that. This higher discriminatory power of multidimensional size
functions gives a further motivation for their definition and use.
\end{example}

%Let us now denote by $d_{match}(\ell_{(\mathcal{M},F_{(\vec l,\vec
%b)}^{\fr})},\ell_{(\mathcal{N},F_{(\vec l,\vec b)}^{\vec\Phi})})$
%the matching distance between the $1$-dimensional size functions
%$\ell_{(\mathcal{M},F_{(\vec l,\vec b)}^{\fr})}$ and
%$\ell_{(\mathcal{N},F_{(\vec l,\vec b)}^{\vec\Phi})}$.
The next result proves the stability of $d_{match}$ with respect to
the choice of the half-planes of the foliation. Indeed, the next
proposition states that small enough changes in $(\vec l, \vec b)$
with respect to the $\max$-norm induce small changes of
$\ell_{(\mathcal{M},F_{(\vec l,\vec b)}^{\vec\varphi})}$ with
respect to the matching distance.

\begin{prop}\label{stability2}
If $(\mathcal{M},\vec\varphi)$ is a size pair, $(\vec l,\vec b)\in
Adm_k$ and $\varepsilon$ is a real number with
$0<\varepsilon<\min_{i=1,\dots,k} l_i$, then for every admissible
pair $(\vec l',\vec b')$ with $\|(\vec l,\vec b)-(\vec l',\vec
b)\|_{\infty}\leq\varepsilon$, it holds that
$$d_{match}(\ell_{(\mathcal{M},F_{(\vec l,\vec
b)}^{\vec\varphi})},\ell_{(\mathcal{M},F_{(\vec l',\vec
b')}^{\vec\varphi})})\leq\varepsilon\cdot\frac{\max_{P\in\mathcal{M}}\left\|\vec\varphi(P)\right\|_{\infty}\!+\!\|\vec
l\|_{\infty}\!+\!\|\vec
b\|_{\infty}}{\min_{i=1,\dots,k}\{l_i(l_i-\varepsilon)\}}.$$
\end{prop}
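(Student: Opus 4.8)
The plan is to deduce the estimate from the Matching Stability Theorem \ref{StabilityTheorem}. Since $(\vec l',\vec b')$ is admissible, $l_i'>0$ for every $i$, so $F_{(\vec l',\vec b')}^{\vec\varphi}=\max_i\{(\varphi_i-b_i')/l_i'\}$ is a well-defined (continuous) $1$-dimensional measuring function on the compact space $\mathcal{M}$, exactly as $F_{(\vec l,\vec b)}^{\vec\varphi}$ is. Hence it suffices to bound $\max_{P\in\mathcal{M}}\bigl|F_{(\vec l,\vec b)}^{\vec\varphi}(P)-F_{(\vec l',\vec b')}^{\vec\varphi}(P)\bigr|$ by the right-hand side of the claimed inequality and then apply Theorem \ref{StabilityTheorem} with that quantity in place of $\varepsilon$.

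First I would use the elementary inequality $\bigl|\max_i a_i-\max_i c_i\bigr|\le\max_i|a_i-c_i|$ to get, for every $P\in\mathcal{M}$,
$$\bigl|F_{(\vec l,\vec b)}^{\vec\varphi}(P)-F_{(\vec l',\vec b')}^{\vec\varphi}(P)\bigr|\le\max_{i=1,\dots,k}\left|\frac{\varphi_i(P)-b_i}{l_i}-\frac{\varphi_i(P)-b_i'}{l_i'}\right|.$$
For a fixed index $i$, I would put the difference over the common denominator $l_i l_i'$ and rearrange its numerator as $(l_i'-l_i)(\varphi_i(P)-b_i)+l_i(b_i'-b_i)$, so that
$$\left|\frac{\varphi_i(P)-b_i}{l_i}-\frac{\varphi_i(P)-b_i'}{l_i'}\right|\le\frac{|l_i'-l_i|\cdot|\varphi_i(P)-b_i|}{l_i l_i'}+\frac{|b_i'-b_i|}{l_i'}.$$
Now I would insert the hypotheses. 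From $\|(\vec l,\vec b)-(\vec l',\vec b')\|_{\infty}\le\varepsilon$ we obtain $|l_i-l_i'|\le\varepsilon$ and $|b_i-b_i'|\le\varepsilon$ for all $i$, while $\varepsilon<\min_i l_i$ gives $l_i'\ge l_i-\varepsilon>0$; moreover $|\varphi_i(P)-b_i|\le\|\vec\varphi(P)\|_{\infty}+\|\vec b\|_{\infty}\le\max_{P\in\mathcal{M}}\|\vec\varphi(P)\|_{\infty}+\|\vec b\|_{\infty}$ (finite by compactness of $\mathcal{M}$ and continuity of $\vec\varphi$) and $l_i\le\|\vec l\|_{\infty}$. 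Substituting, the last display is at most $\varepsilon\,(\max_{P\in\mathcal{M}}\|\vec\varphi(P)\|_{\infty}+\|\vec b\|_{\infty}+l_i)/(l_i(l_i-\varepsilon))$, which in turn is at most $\varepsilon\,(\max_{P\in\mathcal{M}}\|\vec\varphi(P)\|_{\infty}+\|\vec l\|_{\infty}+\|\vec b\|_{\infty})/\min_{i=1,\dots,k}\{l_i(l_i-\varepsilon)\}$. Since this bound is uniform in $i$ and in $P$, combining the three displays yields the same bound for $\max_{P\in\mathcal{M}}\bigl|F_{(\vec l,\vec b)}^{\vec\varphi}(P)-F_{(\vec l',\vec b')}^{\vec\varphi}(P)\bigr|$, and Theorem \ref{StabilityTheorem} concludes the proof.

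I do not expect a genuine obstacle: the argument is a direct computation feeding into the Matching Stability Theorem. The only steps requiring a little care are the algebraic rearrangement of the numerator so that the common factor $\varepsilon$ can be extracted, the two coarsenings at the end (replacing $l_i$ by $\|\vec l\|_{\infty}$ in the numerator and $l_i(l_i-\varepsilon)$ by $\min_i\{l_i(l_i-\varepsilon)\}$ in the denominator), and the observation that the assumption $\varepsilon<\min_i l_i$ keeps $l_i-\varepsilon$ and $l_i'$ strictly positive, so that all the quotients above remain well defined.
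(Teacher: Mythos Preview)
Your argument is correct. Note, however, that the paper does not actually supply a proof of this proposition: it is quoted from \cite{BiCeFrGiLa08} as part of the background material, so there is no in-paper proof to compare against. That said, your route---bounding $\max_{P\in\mathcal{M}}\bigl|F_{(\vec l,\vec b)}^{\vec\varphi}(P)-F_{(\vec l',\vec b')}^{\vec\varphi}(P)\bigr|$ via the inequality $|\max_i a_i-\max_i c_i|\le\max_i|a_i-c_i|$, the algebraic rearrangement of the numerator, and then invoking the Matching Stability Theorem~\ref{StabilityTheorem}---is exactly the natural (and essentially unique) way to obtain this estimate, and it matches the argument given in \cite{BiCeFrGiLa08}.
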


\begin{remark}\label{stability}
Analogously, it is possible to prove (cf.
\cite[Prop.~2]{BiCeFrGiLa08}) that $d_{match}$ is stable with
respect to the chosen measuring function, i.e. that small enough
changes in $\fr$ with respect to the $\max$-norm induce small
changes of $\ell_{(\mathcal{M},F_{(\vec l,\vec b)}^{\fr})}$ with
respect to the matching distance.
\end{remark}

Proposition \ref{stability2} and Remark \ref{stability} guarantee
the stability of this approach.

\section{Main Results}\label{MainResults}

In this section we are going to prove some new results
%giving necessary conditions
about the discontinuities of multidimensional size functions.
In order to do that, we will confine ourselves to
the case of a size pair $(\mathcal{M},\vec\varphi)$, where
$\mathcal{M}$ is a closed $C^1$ Riemannian $m$-manifold.

From now to Theorem \ref{Teoremone} we shall assume that an
admissible pair $(\vec l,\vec b)\in Adm_k$ is fixed, considering
the $1$-dimensional size function $\ell_{(\M,F)}$, where
$F(Q)=\max_{i=1,\dots,k}\frac{\varphi_i(Q)-b_i}{l_i}$. We shall
say that $F$ and $\ell_{(\M,F)}$ are the ($1$-dimensional)
measuring function and the size function corresponding to the
half-plane $\pi_{(\vec l,\vec b)}$, respectively.

The main results of this section are stated in Theorem
\ref{Teoremone} and Theorem \ref{Teoremone2}, showing a necessary
condition for a point $(\vec x,\vec y)\in\Delta^+$ to be a
discontinuity point for the size function
$\ell_{(\M,\vec\varphi)}$, under the assumption that $\vec\varphi$
is $C^1$ and $C^0$, respectively. For the sake of clarity, we will
now provide  a sketch of the arguments that will lead us to the
proof of our main results.

Theorem \ref{Teoremone} is a generalization in the $k$-dimensional
setting of Corollary \ref{DiscR}, stating that each discontinuity
point for a $1$-dimensional size function $\ell_{(\M,\varphi)}$,
related to a $C^1$ measuring function $\varphi$, is such that at
least one of its coordinates is a critical value for $\varphi$. We
recall that Corollary \ref{DiscR} directly descends from the
Representation Theorem \ref{RepresentationTheorem} and from
Theorem \ref{ThNuovo}, according to which each finite coordinate
of a cornerpoint for $\ell_{(\M,\varphi)}$ has to be a critical
value for $\varphi$. Our first goal is to prove that a modified
version of this last statement holds for the $1$-dimensional size
function $\ell_{(\M,F)}$ corresponding to the half-plane
$\pi_{(\vec l,\vec b)}$. The reason for such an adaptation is that
the $1$-dimensional measuring function $F$ is not $C^1$ (even in
case $\fr$ is $C^1$), and therefore we need to generalize the
concepts of critical point and critical value by introducing the
definitions of {\em $(\vec l,\vec b)$-pseudocritical point} and
{\em $(\vec l,\vec b)$-pseudocritical value} for a $C^1$ function
(Definition \ref{PcPoint}). These notions, together with an
approximation in $C^0(\M,\R)$ of the function $F$ by $C^1$
functions, are used to prove that, if $\fr\in C^1(\M,\R^k)$, each
finite coordinate of a cornerpoint for $\ell_{(\M,F)}$ has to be
an $(\vec l,\vec b)$-pseudocritical value for $\vec\varphi$
(Theorem \ref{main}). Next, we show (Proposition \ref{LemmaDisc})
that a correspondence exists between the discontinuity points of
$\ell_{(M,F)}$ and the ones of $\ell_{(M,\vec\varphi)}$. Theorem
\ref{main} and Proposition \ref{LemmaDisc} lead us to the relation
(Theorem \ref{TeorDisc}) between the discontinuity points for
$\ell_{(\M,\vec\varphi)}$, lying on the half-plane $\pi_{(\vec
l,\vec b)}$, and the $(\vec l,\vec b)$-pseudocritical values for
$\vec\varphi$. This last result is refined in Theorem
\ref{Teoremone} under the assumption that $\fr$ is $C^1$,
providing a necessary condition for discontinuities of
$\ell_{(\M,\vec\varphi)}$ that does not depend on the half-planes
of the foliation. This can be done by introducing the concepts of
{\em pseudocritical point} and {\em pseudocritical value} for an
$\R^k$-valued $C^1$ function (Definition \ref{PseudoCritico}), and
considering a suitable projection $\rho:\R^k\to\R^h$. The
necessary condition given in Theorem \ref{Teoremone} is finally
generalized to the case of continuous measuring functions (Theorem
\ref{Teoremone2}), once more by means of an approximation
technique, and the notions of \emph{special point} and
\emph{special value}.

Before going on, we need the following definition:

\begin{definition}\label{PcPoint}
Assume that $\fr\in C^1(\M,\R^k)$. For every $Q\in\mathcal{M}$,
set
$I_Q=\left\{i\in\{1,\dots,k\}:\frac{\varphi_i(Q)-b_i}{l_i}=F(Q)\right\}$.
We shall say that $Q$ is an {\em $(\vec l,\vec b)$-pseudocritical
point for $\vec\varphi$} if the convex hull of the gradients
$\nabla\varphi_i(Q)$, $i\in I_Q$, contains the null vector, i.e.
for every $i\in I_Q$ there exists a real value $\lambda_i$ such
that  $\sum_{i\in I_Q}\lambda_i\nabla\varphi_i(Q)=\mathbf 0$, with
$0\leq\lambda_i\leq1$ and $\sum_{i\in I_Q}\lambda_i=1$.  If $Q$ is
an $(\vec l,\vec b)$-pseudocritical point for $\vec\varphi$, the
value $F(Q)$ will be called an {\em $(\vec l,\vec
b)$-pseudocritical value for $\vec\varphi$}.
\end{definition}

\begin{remark}
The concept of $(\vec l,\vec b)$-pseudocritical point is strongly
connected, via the function $F$ introduced in Definition
\ref{PcPoint}, with the notion of generalized gradient introduced
by F.~H.~Clarke \cite{Cl90}. For a point $Q\in\M$, the condition
of being $(\vec l,\vec b)$-pseudocritical for $\vec\varphi$
corresponds to the one of being ``critical'' for the generalized
gradient of $F$ \cite[Prop.~2.3.12]{Cl90}. However, in this
context we prefer to adopt a terminology highlighting the dependence
on the considered half-plane.
\end{remark}

We can now state our first result.

\begin{theorem}\label{main}
Assume that $\fr\in C^1(\M,\R^k)$. If $(\sigma,\tau)$ is a proper
cornerpoint of $\ell_{(\M,F)}$, then both $\sigma$ and $\tau$ are
${(\vec l,\vec b\,)}$-pseudocritical values for $\vec\varphi$. If
$(\sigma,\infty)$ is a cornerpoint at infinity of $\ell_{(\M,F)}$,
then $\sigma$ is an ${(\vec l,\vec b\,)}$-pseudocritical value for
$\vec\varphi$.
\end{theorem}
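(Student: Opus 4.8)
The plan is to reduce the statement about the non-smooth function $F$ to the already-established smooth case (Theorem \ref{ThNuovo}) via a careful $C^0$-approximation of $F$ by $C^1$ functions, then pass to the limit using the Matching Stability Theorem \ref{StabilityTheorem}. I will treat only the proper cornerpoint case, since the cornerpoint-at-infinity case is analogous. The subtle point is that $F(Q)=\max_{i}\frac{\varphi_i(Q)-b_i}{l_i}$ is not $C^1$ where the maximum is attained by several indices, so I cannot directly invoke Theorem \ref{ThNuovo}; instead I want to approximate $F$ by $C^1$ functions $F_\eps$ whose critical values are, in the limit, forced to be $(\vec l,\vec b)$-pseudocritical values of $\vec\varphi$.

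\medskip

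First I would construct the approximating family. A natural choice is a smoothed maximum: for $\eps>0$ set
$$
F_\eps(Q)=\eps\log\!\left(\sum_{i=1}^k \exp\!\left(\frac{1}{\eps}\cdot\frac{\varphi_i(Q)-b_i}{l_i}\right)\right),
$$
the classical ``softmax'' (log-sum-exp) regularization. Since $\vec\varphi\in C^1(\M,\R^k)$, each $F_\eps$ is $C^1$ on $\M$, and one checks the uniform estimate $0\le F_\eps-F\le \eps\log k$, so $\max_{Q\in\M}|F_\eps(Q)-F(Q)|\le\eps\log k\to0$. The gradient has the explicit form $\nabla F_\eps(Q)=\sum_{i=1}^k \mu_i^\eps(Q)\,\frac{\nabla\varphi_i(Q)}{l_i}$, where $\mu_i^\eps(Q)=\exp(\tfrac1\eps\tfrac{\varphi_i(Q)-b_i}{l_i})/\sum_j\exp(\tfrac1\eps\tfrac{\varphi_j(Q)-b_j}{l_j})$ are convex weights summing to $1$. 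The key qualitative fact is that as $\eps\to0$ the weights $\mu_i^\eps(Q)$ concentrate on the index set $I_Q$ of maximizers; since $l_i>0$, a critical point $Q_\eps$ of $F_\eps$ (i.e. $\nabla F_\eps(Q_\eps)=\mathbf 0$) satisfies $\sum_i \mu_i^\eps(Q_\eps)\,l_i^{-1}\nabla\varphi_i(Q_\eps)=\mathbf 0$, which is a convex combination (after renormalizing the coefficients $\mu_i^\eps l_i^{-1}$, still positive) of the gradients $\nabla\varphi_i(Q_\eps)$ with the null vector in the convex hull.

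\medskip

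Next I would run the limiting argument. By the Matching Stability Theorem \ref{StabilityTheorem}, since $\|F_\eps-F\|_\infty\le\eps\log k$, for each $\eps$ there is a proper cornerpoint $(\sigma_\eps,\tau_\eps)$ of $\ell_{(\M,F_\eps)}$ with $\|(\sigma,\tau)-(\sigma_\eps,\tau_\eps)\|_\infty\le\eps\log k$. By Theorem \ref{ThNuovo} applied to the $C^1$ function $F_\eps$ on the closed $C^1$ Riemannian manifold $\M$, both $\sigma_\eps$ and $\tau_\eps$ are critical values of $F_\eps$: there exist critical points $Q_\eps,Q'_\eps$ with $F_\eps(Q_\eps)=\sigma_\eps$, $F_\eps(Q'_\eps)=\tau_\eps$, and $\nabla F_\eps(Q_\eps)=\nabla F_\eps(Q'_\eps)=\mathbf 0$. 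By compactness of $\M$, pass to a subsequence with $Q_\eps\to Q$, $Q'_\eps\to Q'$. Then $F(Q)=\lim F_\eps(Q_\eps)=\lim\sigma_\eps=\sigma$, similarly $F(Q')=\tau$. It remains to show $Q$ is an $(\vec l,\vec b)$-pseudocritical point of $\vec\varphi$ (and likewise $Q'$). From $\nabla F_\eps(Q_\eps)=\mathbf 0$ write $\sum_i \nu_i^\eps\nabla\varphi_i(Q_\eps)=\mathbf 0$ with $\nu_i^\eps=\mu_i^\eps(Q_\eps) l_i^{-1}/(\sum_j\mu_j^\eps(Q_\eps)l_j^{-1})$, a probability vector; pass to a further subsequence so that $\nu_i^\eps\to\nu_i$ with $\nu_i\ge0$, $\sum_i\nu_i=1$. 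The main obstacle is to verify that the limit weights $\nu_i$ are supported on $I_Q$: if $i\notin I_Q$, then $\tfrac{\varphi_i(Q)-b_i}{l_i}<F(Q)$, so by continuity $\tfrac{\varphi_i(Q_\eps)-b_i}{l_i}\le F(Q_\eps)-\delta$ for small $\eps$ and some $\delta>0$, whence $\mu_i^\eps(Q_\eps)\le \exp(-\delta/\eps)/\mu_{i_0}^\eps(Q_\eps)$ for a maximizing index $i_0$, forcing $\nu_i^\eps\to 0$. Therefore $\sum_{i\in I_Q}\nu_i\nabla\varphi_i(Q)=\mathbf 0$ by continuity of the gradients, with $\nu_i\ge 0$, $\sum_{i\in I_Q}\nu_i=1$ — exactly the condition in Definition \ref{PcPoint}. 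Hence $\sigma=F(Q)$ is an $(\vec l,\vec b)$-pseudocritical value of $\vec\varphi$, and symmetrically so is $\tau$, completing the proof. The one place requiring genuine care is this concentration estimate for the softmax weights together with the subsequence bookkeeping; everything else is routine.
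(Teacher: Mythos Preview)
Your proof is correct and follows the same strategy as the paper's: approximate $F=\max_i\Phi_i$ uniformly by $C^1$ functions, invoke Theorem~\ref{ThNuovo} and the Matching Stability Theorem~\ref{StabilityTheorem} to obtain nearby critical points of the approximants, then pass to the limit by compactness and show that the limiting convex combination of gradients is supported on $I_Q$. The only difference is the choice of smoothing---the paper uses the $L^p$-type approximation $F_p=\bigl(\sum_i(\Phi_i+c)^p\bigr)^{1/p}-c$ rather than your log-sum-exp---but the resulting concentration-of-weights argument is identical in spirit (your displayed bound on $\mu_i^\eps(Q_\eps)$ is slightly garbled; the clean estimate is simply $\mu_i^\eps(Q_\eps)\le e^{-\delta/\eps}$, which already gives $\nu_i^\eps\to 0$).
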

\begin{proof}
We confine ourselves to proving the former statement, since the
proof of the latter is analogous. The idea is to show that our
thesis holds for a $C^1$ function approximating the measuring
function $F:\mathcal{M}\to\R$ in $C^0(\M,\R)$, and verify that
this property passes to the limit. Let us now set
$\Phi_i(Q)=\frac{\varphi_i(Q)-b_i}{l_i}$ and choose $c\in\R$ such
that $\min_{Q\in\M}\Phi_i(Q)>-c$, for every $i=1,\dots,k$.
Consider the function sequence $(F_p)$,
$p\in\mathbb{N}^+=\mathbb{N}\setminus\{0\}$, where
$F_p:\mathcal{M}\to\R$ and
$F_p(Q)=\left(\sum_{i=1}^k(\Phi_i(Q)+c)^p\right)^{\frac{1}{p}}-c$:
Such a sequence converges uniformly  to the function $F$. Indeed,
for every $Q\in\mathcal{M}$ and for every index $p$ we have that
{\setlength\arraycolsep{2pt}
\begin{eqnarray*}
|F(Q)-F_p(Q)|&=&\Bigg|\max_i\Phi_i(Q)-\Bigg(\Bigg(\sum_{i=1}^k(\Phi_i(Q)+c)^p\Bigg)^{\frac{1}{p}}-c\Bigg)\Bigg|=\\
&=&\Bigg|\max_i\{\Phi_i(Q)+c\}-\Bigg(\sum_{i=1}^k(\Phi_i(Q)+c)^p\Bigg)^{\frac{1}{p}}\Bigg|=\\
&=&\Bigg(\sum_{i=1}^k(\Phi_i(Q)+c)^p\Bigg)^{\frac{1}{p}}-\max_i\{\Phi_i(Q)+c\}\leq\\&\leq&\max_i\{\Phi_i(Q)+c\}\cdot(k^{\frac{1}{p}}-1).
\end{eqnarray*}}
Let us now consider a proper cornerpoint $\bar C$ of the size
function $\ell_{(\mathcal{M},F)}$. By the Matching Stability
Theorem \ref{StabilityTheorem} it follows that it is possible to
find a large enough $p$ and a proper cornerpoint $C_p$ of the
$1$-dimensional size function $\ell_{(\mathcal{M},F_p)}$
(associated with the size pair $(\M,F_p)$) such that $C_p$ is
arbitrarily close to $\bar C$. Since $C_p$ is a proper cornerpoint
of $\ell_{(\mathcal{M},F_p)}$, it follows from Theorem
\ref{ThNuovo} that its coordinates are critical values of the
$C^1$ function $F_p$. By focusing our attention on the abscissa of
$C_p$ (analogous considerations hold for the ordinate of $C_p$) it
follows that there exists $Q_p\in\mathcal{M}$ with
$x(C_p)=F_p(Q_p)$ and (in respect to local coordinates
$x_1,\dots,x_m$ of the $m$-manifold $\M$)
{\setlength\arraycolsep{2pt}
\begin{eqnarray*}
0&=&\frac{\partial F_p}{\partial
x_1}(Q_p)=\left(\sum_{i=1}^{k}(\Phi_i(Q_p)+c)^{p}\right)^{\frac{1-p}{p}}\cdot\left(\sum_{i=1}^{k}(\Phi_i(Q_p)+c)^{p-1}\cdot\frac{\partial
\Phi_i}{\partial x_1}(Q_p)\right)\\
\vdots&&\\
0&=&\frac{\partial F_p}{\partial
x_m}(Q_p)=\left(\sum_{i=1}^{k}(\Phi_i(Q_p)+c)^{p}\right)^{\frac{1-p}{p}}\cdot\left(\sum_{i=1}^{k}(\Phi_i(Q_p)+c)^{p-1}\cdot\frac{\partial
\Phi_i}{\partial x_m}(Q_p)\right)\ .
\end{eqnarray*}}

Hence we have {\setlength\arraycolsep{2pt}
\begin{eqnarray*}
\sum_{i=1}^{k}&&(\Phi_i(Q_p)+c)^{p-1}\cdot\frac{\partial
\Phi_i}{\partial x_1}(Q_p)=0\\
\vdots\ \ &&\\
\sum_{i=1}^{k}&&(\Phi_i(Q_p)+c)^{p-1}\cdot\frac{\partial
\Phi_i}{\partial x_m}(Q_p)=0\ .
\end{eqnarray*}}

Therefore, by setting
$$\boldsymbol{v}_p=(v_p^1,\dots,v_p^k)=\left((\Phi_1(Q_p)+c)^{p-1},\dots,(\Phi_k(Q_p)+c)^{p-1}\right),$$
we can write $^t\!\!J(Q_p)\cdot
^t\!\boldsymbol{v}_p=\textbf{0}$, where $J(Q_p)$ is the Jacobian
matrix of $\vec\Phi=(\Phi_1,\dots,\Phi_k)$ computed at the point
$Q_p$. By the compactness of $\mathcal{M}$, we can assume
(possibly by extracting a subsequence) that $(Q_p)$ converges to a
point $\bar Q$. Let us define
$\boldsymbol{u}_p=\frac{\boldsymbol{v}_p}{\left\|\boldsymbol{v}_p\right\|_{\infty}}$. By
compactness (recall that $\left\|\boldsymbol{u}_p\right\|_{\infty}=1$)  we can also assume
(possibly by considering a subsequence) that the sequence $(\boldsymbol{u}_p)$
converges to a vector $\boldsymbol{\bar u}=(\bar u^1,\dots,\bar u^k)$, where
$\bar{u}^i=\lim_{p\to\infty}\frac{v_p^i}{\left\|\boldsymbol{v}_p\right\|_{\infty}}$ and $\left\|\boldsymbol{\bar u}\right\|_{\infty}=1$.
Obviously $^t\!\!J(Q_p)\cdot ^t\!\!\boldsymbol{u}_p=\textbf{0}$ and hence we
have
\begin{eqnarray}\label{jacobiano}
^t\!\!J(\bar Q)\cdot ^t\!\!\boldsymbol{\bar u}=\textbf{0}.
\end{eqnarray}
Since for every index $p$ and for every $i=1,\dots,k$ the relation
$0<u_p^i\leq 1$ holds, for each $i=1,\dots,k$ the condition
$0\leq\bar{u}^i=\lim_{p\to\infty}u_p^i\leq1$ is satisfied. Let us
now recall that $F(\bar Q)=\max_i\Phi_i(\bar Q)$, by definition,
and consider the set $I_{\bar Q}=\{i\in\{1,\dots,k\}:\Phi_i({\bar
Q})=F({\bar Q})\}=\{i_1,\dots,i_h\}$. For every $r\not\in I_{\bar
Q}$ the component  $\bar u^r$ is equal to $0$, since $0\leq
u_p^r=\left(\frac{\Phi_r\left(Q_p\right)+c}{\max_i\{\Phi_r\left(Q_p\right)+c\}}\right)^{p-1}$
and
$\lim_{p\to\infty}\frac{\Phi_r\left(Q_p\right)+c}{\max_i\{\Phi_r\left(Q_p\right)+c\}}=\frac{\Phi_r\left(\bar
Q\right)+c}{F\left(\bar Q\right)+c}$, which is strictly less than
$1$ for $\Phi_r(\bar Q)<F(\bar Q)$. Hence we have
$\boldsymbol{\bar u}=\bar u^{i_1}\cdot\boldsymbol{e}_{i_1}+\dots
+\bar u^{i_h}\cdot\boldsymbol{e}_{i_h}$, where $\boldsymbol{e}_i$
is the $i^{th}$ vector of the standard basis of $\R^k$. Thus, from
equality (\ref{jacobiano}) we have $\sum_{j=1}^{h}\bar
u^{i_j}\cdot\frac{\partial\Phi_{i_j}}{\partial x_1}(\bar
Q)=0,\dots,\sum_{j=1}^{h}\bar
u^{i_j}\cdot\frac{\partial\Phi_{i_j}}{\partial x_m}(\bar Q)=0$,
that is $\sum_{j=1}^{h}\frac{\bar
u^{i_j}}{l_{i_j}}\cdot\frac{\partial\varphi_{i_j}}{\partial
x_1}(\bar Q)=0,\dots,\sum_{j=1}^{h}\frac{\bar
u^{i_j}}{l_{i_j}}\cdot\frac{\partial\varphi_{i_j}}{\partial
x_m}(\bar Q)=0$, since $\Phi_i=\frac{\varphi-b_i}{l_i}$. Hence,
$\sum_{j=1}^h\frac{\bar u^{i_j}}{l_{i_j}}\nabla\varphi_{i_j}(\bar
Q)=\textbf{0}$. By recalling that $\bar u^{i_j}\geq 0$,
$l_{i_j}>0$ and $\boldsymbol{\bar u}$ is a non--vanishing vector,
it follows immediately that $\sum_{j=1}^h\frac{\bar
u^{i_j}}{l_{i_j}}>0$ and therefore the convex hull of the
gradients $\nabla\varphi_{i_1}({\bar Q}),\dots,
\nabla\varphi_{i_h}({\bar Q})$ contains the null vector. Thus,
$\bar Q$ is an $(\vec l,\vec b)$-pseudocritical point for
$\vec\varphi$ and hence $F(\bar Q)$ is an $(\vec l,\vec
b)$-pseudocritical value for $\vec\varphi$. Moreover, from the
uniform convergence of the sequence $(F_p)$ to $F$ and from the
continuity of the function $F$, we have (recall that $\bar
C=\lim_{p\to\infty}C_p$)
$$x(\bar{C})=\lim_{p\to \infty}x(C_p)=\lim_{p\to \infty}F_p(Q_p)=F(\bar Q).$$
In other words, the abscissa $x(\bar C)$ of a proper cornerpoint
of $\ell_{(\mathcal{M},F)}$ is the image of an $(\vec l,\vec
b)$-pseudocritical point $\bar Q$ through $F$, i.e. an $(\vec
l,\vec b)$-pseudocritical value for $\vec\varphi$. An analogous
reasoning holds for the ordinate $y(\bar C)$ of a proper
cornerpoint.
\end{proof}

Our next result shows that each discontinuity of
$\ell_{(\mathcal{M},\vec\varphi)}$ corresponds to a discontinuity
of the $1$-dimensional size function associated with a suitable
half-plane of the foliation.

\begin{prop}\label{LemmaDisc}
A point $(\vec x,\vec y)=(s\cdot\vec l+\vec b, t\cdot\vec l+\vec
b)\in\pi_{(\vec l,\vec b\,)}$ is a discontinuity point for
$\ell_{(\M,\vec\varphi)}$ if and only if $(s,t)$ is a
discontinuity point for $\ell_{(\M,F)}$.
\end{prop}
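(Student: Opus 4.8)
The statement is an equivalence relating discontinuities of the $k$-dimensional size function $\ell_{(\M,\vec\varphi)}$ on a fixed half-plane $\pi_{(\vec l,\vec b)}$ to discontinuities of the associated $1$-dimensional size function $\ell_{(\M,F)}$. The natural strategy is to exploit the Reduction Theorem \ref{reduction}, which tells us that $\ell_{(\M,\vec\varphi)}(s\vec l+\vec b, t\vec l+\vec b)=\ell_{(\M,F)}(s,t)$ for all $(s,t)$ with $s<t$. So along the half-plane the two functions agree after the affine reparametrization $(s,t)\mapsto(s\vec l+\vec b,\, t\vec l+\vec b)$, which is a homeomorphism of $\{(s,t):s<t\}$ onto $\pi_{(\vec l,\vec b)}\subseteq\Delta^+$. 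The subtlety is that "discontinuity point of $\ell_{(\M,\vec\varphi)}$" means discontinuity of the full function on $\Delta^+$, not merely discontinuity of its restriction to the half-plane; so one direction is immediate and the other requires an argument showing that continuity of $\ell_{(\M,\vec\varphi)}$ transverse to the half-plane is automatic once we have continuity along it.

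First I would record the easy implication: if $(s,t)$ is a discontinuity point of $\ell_{(\M,F)}$, then since $\ell_{(\M,\vec\varphi)}$ restricted to $\pi_{(\vec l,\vec b)}$ coincides (via the above homeomorphism) with $\ell_{(\M,F)}$, the restriction of $\ell_{(\M,\vec\varphi)}$ to $\pi_{(\vec l,\vec b)}$ is discontinuous at $(\vec x,\vec y)$, hence $\ell_{(\M,\vec\varphi)}$ itself is discontinuous there. For the converse, suppose $(s,t)$ is a continuity point of $\ell_{(\M,F)}$; I must show $\ell_{(\M,\vec\varphi)}$ is continuous at $(\vec x,\vec y)=(s\vec l+\vec b,t\vec l+\vec b)$. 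The key tool here is Proposition \ref{Foliazione}: every point of $\Delta^+$ lies on exactly one half-plane of the foliation, and by Proposition \ref{stability2} the formal series $\sigma_{(\vec l',\vec b')}$ depends continuously (in the matching distance) on $(\vec l',\vec b')$. So given a sequence $(\vec x_n,\vec y_n)\to(\vec x,\vec y)$ in $\Delta^+$, each $(\vec x_n,\vec y_n)$ determines an admissible pair $(\vec l_n,\vec b_n)$ with parameters $(s_n,t_n)$, and one checks (continuity of the foliation map) that $(\vec l_n,\vec b_n)\to(\vec l,\vec b)$ and $(s_n,t_n)\to(s,t)$.

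Then I would estimate
$$\abs{\ell_{(\M,\vec\varphi)}(\vec x_n,\vec y_n)-\ell_{(\M,\vec\varphi)}(\vec x,\vec y)}=\abs{\ell_{(\M,F_{(\vec l_n,\vec b_n)}^{\fr})}(s_n,t_n)-\ell_{(\M,F)}(s,t)},$$
and bound this by a triangle-inequality split into $\abs{\ell_{(\M,F_{(\vec l_n,\vec b_n)}^{\fr})}(s_n,t_n)-\ell_{(\M,F)}(s_n,t_n)}$ plus $\abs{\ell_{(\M,F)}(s_n,t_n)-\ell_{(\M,F)}(s,t)}$. The second term goes to $0$ because $(s,t)$ is, by hypothesis, a continuity point of the $1$-dimensional size function $\ell_{(\M,F)}$. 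The first term is controlled by Proposition \ref{stability2}: for $n$ large the matching distance $d_{match}(\ell_{(\M,F_{(\vec l_n,\vec b_n)}^{\fr})},\ell_{(\M,F)})$ is as small as we like, and since both functions are integer-valued, once this distance together with $|s_n-s|$ and $|t_n-t|$ is small enough (compared to the distance of $(s,t)$ from the nearest cornerpoint, using the Representation Theorem \ref{RepresentationTheorem}), the values must actually coincide. This forces $\ell_{(\M,\vec\varphi)}(\vec x_n,\vec y_n)=\ell_{(\M,\vec\varphi)}(\vec x,\vec y)$ for $n$ large, giving continuity at $(\vec x,\vec y)$.

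**Main obstacle.** The delicate point is the converse direction: passing from "continuous along the leaf $\pi_{(\vec l,\vec b)}$" to "continuous on all of $\Delta^+$". This requires simultaneously using (i) the continuous dependence of the $1$-dimensional size function on the half-plane parameters — Proposition \ref{stability2}, which contributes a matching-distance bound rather than a pointwise one — and (ii) the integer-valuedness of size functions together with the Representation Theorem to convert a small matching distance into an exact equality of values near a continuity point. One must be careful that the continuity point $(s,t)$ of $\ell_{(\M,F)}$ is bounded away from the cornerpoints of the formal series $\sigma_{(\vec l,\vec b)}$, so that a sufficiently small perturbation of both the point and the formal series cannot change the count; this is where the hypothesis that $(s,t)$ is a \emph{continuity} point (not merely any point) is essential and must be invoked carefully.
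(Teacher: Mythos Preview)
Your proposal is correct and follows essentially the same strategy as the paper: the forward implication is immediate from the Reduction Theorem, and for the converse you argue by contrapositive using Proposition~\ref{stability2} to control nearby leaves together with the integer-valuedness and cornerpoint structure coming from the Representation Theorem. The paper packages the crucial step you flag as the ``main obstacle'' into a short explicit lemma (if $d_{match}(\ell_{(\M,\psi)},\ell_{(\M,\psi')})\le 2\varepsilon$ then $\ell_{(\M,\psi)}(s-\varepsilon,t+\varepsilon)\le\ell_{(\M,\psi')}(s+\varepsilon,t-\varepsilon)$) and then uses monotonicity to sandwich $\ell_{(\M,F')}(s',t')$ between the two equal values $\ell_{(\M,F)}(s-\eta,t+\eta)=\ell_{(\M,F)}(s+\eta,t-\eta)$, rather than your sequential triangle-inequality phrasing; but the ingredients and logic are the same.
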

\begin{proof}
Obviously, if $(s,t)$ is a discontinuity point for
$\ell_{(\M,F)}$, then $(\vec x,\vec y)=(s\cdot\vec l+\vec b,
t\cdot\vec l+\vec b)\in\pi_{(\vec l,\vec b\,)}$ is a discontinuity
point for $\ell_{(\M,\vec\varphi)}$, because of the Reduction
Theorem \ref{reduction}. In order to prove the inverse
implication, we shall verify the contrapositive statement, i.e. if
$(s,t)$ is not a discontinuity point for $\ell_{(\M,F)}$, then
$(s\cdot\vec l+\vec b,t\cdot\vec l+\vec b)$ is not a discontinuity
point for $\ell_{(\M,\vec\varphi)}$. Indeed, if $(s,t)$ is not a
discontinuity point for $\ell_{(\M,F)}$, then $\ell_{(\M,F)}$ is
locally constant at $(s,t)$ (recall that each size function is
natural--valued). Therefore it will be possible to choose a real
number
 $\eta>0$ such that
\begin{eqnarray}\label{equal}
\ell_{(\M,F)}(s-\eta,t+\eta)=\ell_{(\M,F)}(s+\eta,t-\eta).
\end{eqnarray}

Before proceeding in our proof, we need the following result:

\begin{lemma}\label{BoundDMatch}
Let $(\mathcal{M},\psi)$, $(\mathcal{M},\psi')$ be two size pairs,
with $\psi,\psi':\mathcal{M}\to\mathbb{R}$. If
$d_{match}\left(\ell_{(\mathcal{M},\psi)},\ell_{(\mathcal{M},\psi')}\right)\leq
2\varepsilon$, then it holds that
$$
\ell_{(\mathcal{M},\psi)}(s-\varepsilon,t+\varepsilon)\leq\ell_{(\mathcal{M},\psi')}(s+\varepsilon,t-\varepsilon),
$$
for every $(s,t)$ with $s+\varepsilon< t-\varepsilon$.
\end{lemma}
\begin{proof}[Proof of Lemma \ref{BoundDMatch}]
Let $\Delta^*$ be the set given by
$\Delta^+\cup\{(a,\infty):a\in\R\}$. For every $(s,t)$ with $s<t$,
let us define the set
$L_{(s,t)}=\{(\sigma,\tau)\in\Delta^*:\sigma\leq s,\tau>t\}$. By
the Representation Theorem \ref{RepresentationTheorem} we have
that $\ell_{(\mathcal{M},\psi)}(s-\varepsilon,t+\varepsilon)$
equals the number of proper cornerpoints and cornerpoints at
infinity for $\ell_{(\mathcal{M},\psi)}$ belonging to the set
$L_{(s-\varepsilon,t+\varepsilon)}$. Since
$d_{match}\left(\ell_{(\mathcal{M},\psi)},\ell_{(\mathcal{M},\psi')}\right)\leq
2\varepsilon$, the number of proper cornerpoints and cornerpoints
at infinity for $\ell_{(\mathcal{M},\psi')}$ in the set
$L_{(s+\varepsilon,t-\varepsilon)}$ is not less than
$\ell_{(\mathcal{M},\psi)}(s-\varepsilon,t+\varepsilon)$. The
reason is that the change from $\psi$ to $\psi'$ does not move the
cornerpoints more than $2\varepsilon$, with respect to the
$\max$-norm, because of the Matching Stability Theorem
\ref{StabilityTheorem}. By applying the Representation Theorem
\ref{RepresentationTheorem} once again to
$\ell_{(\mathcal{M},\psi')}$, we get our thesis.
\end{proof}

Let us go back to the proof of Proposition \ref{LemmaDisc}. By
Proposition \ref{stability2}, we can then consider a real value
$\varepsilon=\varepsilon(\eta)$ with
$0<\varepsilon<\min_{i=1,\dots,k}l_i$ such that for every
admissible pair $(\vec l',\vec b')$ with $\left\|(\vec l,\vec b)-(\vec
l',\vec b')\right\|_{\infty}\leq\varepsilon$, the relation
$d_{match}(\ell_{(\mathcal{M},F)},\ell_{(\mathcal{M},F')})\leq\frac{\eta}{2}$
holds, where $\ell_{(\M,F')}$ is the $1$-dimensional size function
corresponding to the half-plane $\pi_{(\vec l',\vec b')}$. By
applying Lemma \ref{BoundDMatch} twice and the monotonicity of
$\ell_{(\M,F')}$ in each variable (cf. Remark \ref{monotone}), we get the
inequalities{\setlength\arraycolsep{2pt}
\begin{eqnarray}\label{inequal}
\ell_{(\M,F)}(s-\eta,
t+\eta)&\leq&\ell_{(\M,F')}(s-\frac{\eta}{2},t+\frac{\eta}{2})\nonumber\\
&\leq&\ell_{(\M,F')}(s+\frac{\eta}{2},t-\frac{\eta}{2})\leq\ell_{(\M,F)}(s+\eta,t-\eta).
\end{eqnarray}}
Because of equality (\ref{equal}) we have that the inequalities
(\ref{inequal}) imply {\setlength\arraycolsep{2pt}
\begin{eqnarray}\label{equal2}
\ell_{(\M,F)}(s-\eta,
t+\eta)&=&\ell_{(\M,F')}(s-\frac{\eta}{2},t+\frac{\eta}{2})\nonumber\\
&=&\ell_{(\M,F')}(s+\frac{\eta}{2},t-\frac{\eta}{2})=\ell_{(\M,F)}(s+\eta,t-\eta).
\end{eqnarray}}
Therefore, once again because of the monotonicity of
$\ell_{(\M,F')}$ in each variable, for every $(s',t')$ with
$\left\|(s,t)-(s',t')\right\|_{\infty}\leq\frac{\eta}{2}$ and for
every $(\vec l',\vec b')$ with $\|(\vec l,\vec b)-(\vec l',\vec
b')\|_{\infty}\leq\varepsilon$ the equality
$\ell_{(\M,F')}(s',t')=\ell_{(\M,F)}(s,t)$ holds. By applying the
Reduction Theorem \ref{reduction} we get
$\ell_{(\M,\vec\varphi)}(s'\cdot\vec l'+\vec b',t'\cdot\vec
l'+\vec b')=\ell_{(\M,\vec\varphi)}(s\cdot\vec l+\vec b,t\cdot\vec
l+\vec b)$. In other words, $\ell_{(\M,\vec\varphi)}$ is locally
constant at the point $(\vec x,\vec y)$, and hence $(\vec x,\vec
y)$ is not a discontinuity point for $\ell_{(\M,\vec\varphi)}$.
\end{proof}

\begin{remark}\label{RemLemmaDisc}
Let us observe that Proposition \ref{LemmaDisc} holds under weaker
hypotheses, i.e. in the case that $\M$ is a non-empty, compact and
locally connected Hausdorff space. However, for the sake of
simplicity, we prefer here to confine ourselves to the setting
assumed at the beginning of the present section.
\end{remark}

The following theorem associates the discontinuities of a
multidimensional size function to the $(\vec l,\vec
b)$-pseudocritical values of $\vec\varphi$.

\begin{theorem}\label{TeorDisc}
Let $(\vec x,\vec y)\in\Delta^+$ with $(\vec x,\vec y)=(s\cdot\vec
l+\vec b, t\cdot\vec l+\vec b)\in\pi_{(\vec l,\vec b\,)}$. If $(\vec x,\vec y)$ is a discontinuity point for
$\ell_{(\M,\vec\varphi)}$ then at
least one of the following statements holds:
\begin{description}
  \item[$(i)$] $s$
is a discontinuity point for $\ell_{(\M,F)}(\cdot,t)$;
  \item[$(ii)$] $t$ is a discontinuity point for
  $\ell_{(\M,F)}(s,\cdot)$.
\end{description}
Moreover, $(i)$ and $(ii)$ are equivalent to
\begin{description}
  \item[$(i')$] $\vec x$
is a discontinuity point for $\ell_{(\M,\vec\varphi)}(\cdot,\vec
y)$;
  \item[$(ii')$] $\vec y$ is a discontinuity point for $\ell_{(\M,\vec\varphi)}(\vec
x,\cdot)$,
\end{description}
respectively. If $\fr\in C^1(\M,\R^k)$, statement $(i)$ implies
that $s$ is an $(\vec l,\vec b)$-pseudocriti\-cal value for
$\vec\varphi$, and statement $(ii)$ implies that $t$ is an $(\vec
l,\vec b)$-pseudocritical value for $\vec\varphi$.
\end{theorem}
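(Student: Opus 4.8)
The statement decomposes into three parts, and I would prove them in the order they are listed. The first part — that a discontinuity point $(\vec x,\vec y)$ of $\ell_{(\M,\vec\varphi)}$ forces $s$ to be a discontinuity of $\ell_{(\M,F)}(\cdot,t)$ or $t$ to be one of $\ell_{(\M,F)}(s,\cdot)$ — is obtained by combining Proposition \ref{LemmaDisc} with Corollary \ref{CorDisc}. Indeed, by Proposition \ref{LemmaDisc}, $(\vec x,\vec y)$ being a discontinuity of $\ell_{(\M,\vec\varphi)}$ is equivalent to $(s,t)$ being a discontinuity of the $1$-dimensional size function $\ell_{(\M,F)}$; then Corollary \ref{CorDisc}, applied to the $1$-dimensional size pair $(\M,F)$, immediately yields that $(i)$ or $(ii)$ holds.

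For the equivalence of $(i)$ with $(i')$ and of $(ii)$ with $(ii')$, I would argue directly from the Reduction Theorem \ref{reduction}. Along the half-plane $\pi_{(\vec l,\vec b)}$ the correspondence $s\mapsto \vec x=s\vec l+\vec b$ (with $\vec y=t\vec l+\vec b$ fixed) is an affine homeomorphism of the relevant slice of $\Delta^+$ onto the line $\{\vec x\prec\vec y\}\cap\pi_{(\vec l,\vec b)}$, and under it $\ell_{(\M,F)}(\cdot,t)$ coincides with the restriction of $\ell_{(\M,\vec\varphi)}(\cdot,\vec y)$ to that line. So $s$ is a discontinuity of $\ell_{(\M,F)}(\cdot,t)$ precisely when $\vec x$ is a discontinuity of the restricted function; and since $\ell_{(\M,\vec\varphi)}(\cdot,\vec y)$ is monotone in $\preceq$ (Remark \ref{monotone}), a discontinuity of $\ell_{(\M,\vec\varphi)}(\cdot,\vec y)$ at $\vec x$ is detected already along this line — the monotone jump cannot vanish when we restrict to the segment through $\vec x$ in the positive direction $\vec l$. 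The symmetric argument handles $(ii)\Leftrightarrow(ii')$. This is the step I expect to require the most care: one must check that restricting the multivariate monotone function to the one-dimensional slice does not destroy the discontinuity, which uses that $\vec l$ has all strictly positive components, so the slice is cofinal and coinitial in every direction relevant to the jump.

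Finally, the $C^1$ conclusion follows by feeding $(i)$ (resp. $(ii)$) into Theorem \ref{main}. If $s$ is a discontinuity point of $\ell_{(\M,F)}(\cdot,t)$, then by the Representation Theorem \ref{RepresentationTheorem} (as used in the $1$-dimensional case, cf. Corollary \ref{CorDisc}) there must be a proper cornerpoint of $\ell_{(\M,F)}$ with abscissa $s$, or a cornerpoint at infinity with abscissa $s$; in either case Theorem \ref{main} tells us that $s$ is an $(\vec l,\vec b)$-pseudocritical value for $\vec\varphi$. Symmetrically, $(ii)$ produces a cornerpoint of $\ell_{(\M,F)}$ with ordinate $t$, so $t$ is an $(\vec l,\vec b)$-pseudocritical value for $\vec\varphi$. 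This closes the proof.
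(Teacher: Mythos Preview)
Your proposal is correct and follows essentially the same route as the paper: Proposition \ref{LemmaDisc} plus Corollary \ref{CorDisc} for the first part, the Reduction Theorem together with the $\preceq$-monotonicity from Remark \ref{monotone} (exploiting that $\vec l$ has strictly positive components) for the equivalences $(i)\Leftrightarrow(i')$ and $(ii)\Leftrightarrow(ii')$, and the Representation Theorem plus Theorem \ref{main} for the $C^1$ conclusion. The paper's proof makes the monotonicity step more explicit by writing out the equalities $\ell_{(\M,F)}(s\pm\varepsilon,t)=\ell_{(\M,\vec\varphi)}(\vec x\pm\varepsilon\vec l,\vec y)$, but the argument is the one you sketched.
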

\begin{proof}
By Proposition \ref{LemmaDisc} we have that $(s,t)$ is a
discontinuity point for $\ell_{(\M,F)}$, and from Corollary
\ref{CorDisc} it follows that either $s$ is a discontinuity point
for $\ell_{(\M,F)}(\cdot,t)$ or $t$ is a discontinuity point for
$\ell_{(\M,F)}(s,\cdot)$, or both these conditions hold, thus
proving the first part of the theorem.

Let us now suppose that $s$ is a discontinuity point for
$\ell_{(\M,F)}(\cdot,t)$. Since the function
$\ell_{(\M,F)}(\cdot,t)$ is monotonic, then for every real value
$\varepsilon>0$ we have that
$\ell_{(\M,F)}(s-\varepsilon,t)\neq\ell_{(\M,F)}(s+\varepsilon,t)$.
Moreover, the following equalities hold because of the Reduction
Theorem \ref{reduction}: {\setlength\arraycolsep{2pt}
\begin{eqnarray}\label{Uguaglianze}\begin{array}{ccc}
\ell_{(M,F)}(s-\varepsilon,t)&=&\ell_{(M,\vec\varphi)}((s-\varepsilon)\cdot\vec
l+\vec b,t\cdot\vec l+\vec b)=\ell_{(M,\vec\varphi)}(\vec
x-\varepsilon\cdot\vec l,\vec y)\\
\ell_{(M,F)}(s+\varepsilon,t)&=&\ell_{(M,\vec\varphi)}((s+\varepsilon)\cdot\vec
l+\vec b,t\cdot\vec l+\vec b)=\ell_{(M,\vec\varphi)}(\vec
x+\varepsilon\cdot\vec l,\vec y).\end{array}
\end{eqnarray}} By setting $\vec\varepsilon=\varepsilon\cdot\vec l$, we
get $\ell_{(M,\vec\varphi)}(\vec x-\vec\varepsilon,\vec
y)\neq\ell_{(M,\vec\varphi)}(\vec x+\vec\varepsilon,\vec y)$.
Therefore $\vec x$ is a discontinuity point for
$\ell_{(M,\vec\varphi)}(\cdot,\vec y)$, thus proving that
$(i)\Rightarrow(i')$.

Let us now prove that $(i')\Rightarrow(i)$. If $\vec x$ is a
discontinuity point for $\ell_{(M,\vec\varphi)}(\cdot,\vec y)$,
from the monotonicity in the variable $\vec x$ (cf. Remark
\ref{monotone}) it follows that $\ell_{(M,\vec\varphi)}(\vec
x-\varepsilon\cdot\vec l,\vec y)\neq\ell_{(M,\vec\varphi)}(\vec
x+\varepsilon\cdot\vec l,\vec y)$ for every $\varepsilon>0$.
Therefore, because of the equalities (\ref{Uguaglianze}) we get
$\ell_{(\M,F)}(s-\varepsilon,t)\neq\ell_{(\M,F)}(s+\varepsilon,t)$,
proving that $(i')\Rightarrow(i)$. Analogously, we can show that
$(ii)\Leftrightarrow(ii')$.

Furthermore, if $s$ is a discontinuity
point for $\ell_{(\M,F)}(\cdot,t)$, from the Representation
Theorem \ref{RepresentationTheorem} it follows that $s$ is the
abscissa of a cornerpoint (possibly at infinity). Hence, if $\fr\in C^1(\M,\R^k)$ then by
Theorem \ref{main} we have that $s$ is an $(\vec l,\vec
b)$-pseudocritical value for $\vec\varphi$.

In a similar way, we can examine the case that $t$ is a
discontinuity point for $\ell_{(\M,F)}(s,\cdot)$, and get the
final statement.
\end{proof}

Before giving the first of our main results, we need the following
definition.

\begin{definition}\label{PseudoCritico}
Let $\vec\xi:\mathcal{M}\to\mathbb{R}^h$, and suppose that
$\vec\xi$ is $C^1$ at a point $Q\in\M$.
The point $Q$ is said to be a {\em pseudocritical point for
$\vec\xi$} if the convex hull of the gradients $\nabla\xi_i(Q)$,
$i=1,\dots,h$, contains the null vector, i.e. there exist
$\lambda_1,\dots,\lambda_h\in\mathbb{R}$ such that
$\sum_{i=i}^{h}\lambda_i\cdot\nabla\xi_i(Q)=\mathbf 0$, with
$0\leq\lambda_i\leq1$ and $\sum_{i=1}^{h}\lambda_i=1$. If $Q$ is a
pseudocritical point of $\vec\xi$, then $\vec\xi(Q)$ will be
called a {\em pseudocritical value for $\vec\xi$}.
\end{definition}

\begin{remark}
Definition \ref{PseudoCritico} corresponds to the Fritz John
necessary condition for optimality in Nonlinear Programming
\cite{BaShSh93}. We shall use the term ``pseudocritical'' just for
the sake of conciseness. For further references see \cite{Sm75}.
The concept of pseudocritical point is strongly related also
to the one of Jacobi Set (cf. \cite{HeHa02}).
\end{remark}

The next example makes Definition \ref{PseudoCritico} clearer.

\begin{example}\label{EsempioPcp}
Let us compute the pseudocritical points and values for the
measuring function $\vec\xi=(\xi_1,\xi_2):\M\to\R^2$, where $\M$
is the surface coinciding with the unit sphere $S^2\subset\R^3$,
and $\vec\xi$ is obtained as the restriction to $\M$ of the
function $\vec\Xi=(\Xi_1,\Xi_2):\R^3\to\R^2$, with
$\vec\Xi(x,y,z)=(x,z)$ (see Figure \ref{Pcp}). According to
Definition \ref{PseudoCritico}, it follows that a point $Q\in\M$
is pseudocritical for $\vec\xi$ if and only if either
$\nabla\xi_1(Q)=\mathbf 0$, or $\nabla\xi_2(Q)=\mathbf 0$, or these two gradient
vectors are parallel with opposite verse. Referring to our
example, $\nabla\xi_1(Q)$ and $\nabla\xi_2(Q)$ are the projections
of $\nabla\Xi_1(Q)=(1,0,0)$ and $\nabla\Xi_2(Q)=(0,0,1)$ onto the
tangent space of $\M$ at $Q$, respectively. Therefore, it can be
easily verified that the pseudocritical points of $\M$ for the
function $\vec\xi$ are given by the set
$\{(\cos\alpha,0,\sin\alpha),\ 0\leq\alpha\leq\frac{\pi}{2}\ \vee\
\pi\leq\alpha\leq\frac{3}{2}\pi\}$. Hence, the corresponding
pseudocritical values are the elements of the set
$\{(\cos\alpha,\sin\alpha),\ 0\leq\alpha\leq\frac{\pi}{2}\ \vee\
\pi\leq\alpha\leq\frac{3}{2}\pi\}$.
\end{example}

\begin{figure}[h]
%\begin{center}
\psfrag{x}{$x$}\psfrag{y}{$y$}\psfrag{z}{$z$}\psfrag{P}{$Q$}
\psfrag{chi1(Q)}{\footnotesize{$\nabla\Xi_1(Q)$}}
\psfrag{chi2(Q)}{\footnotesize{$\nabla\Xi_2(Q)$}}
\psfrag{Chi1(Q)}{\footnotesize{$\nabla\xi_1(Q)$}}
\psfrag{Chi2(Q)}{\footnotesize{$\nabla\xi_2(Q)$}}
%\psfrag{ab}{\textcolor[rgb]{1.00,0.00,0.00}{$\vec\chi=(\chi_1,\chi_2):\M\to\R^2$}}
%\psfrag{ab1}{\textcolor[rgb]{0.00,0.00,1.00}{$\vec\chi=(\chi_1,\chi_2,\chi_3):\M\to\R^3$}}
%\psfrag{a}{$\nabla\chi_1(P)$}\psfrag{b}{$\nabla\chi_2(P)$}\psfrag{c}{$\nabla\chi_3(P)$}
%\psfrag{P}{$P$}
%\begin{tabular}{ccc}
\includegraphics[width=\textwidth]{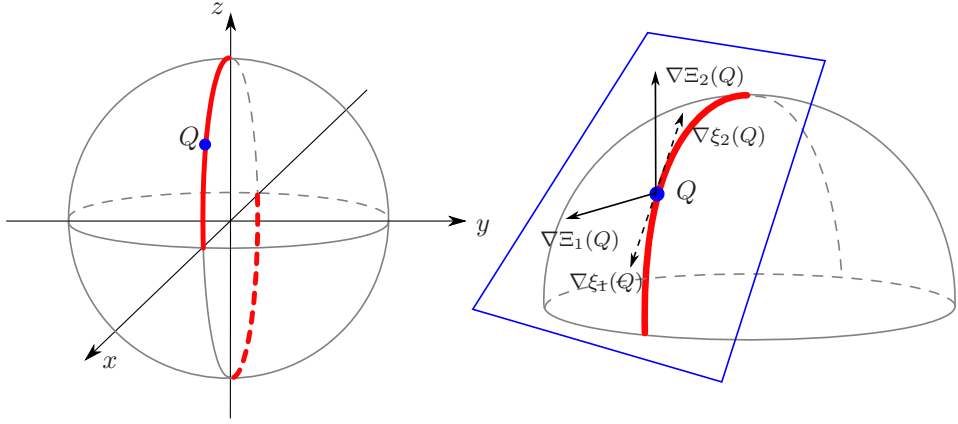}
%\hspace{-5mm}\includegraphics[width=0.4\textwidth]{Pcp1.eps}&\ &\hspace{-5mm}\includegraphics[width=0.4\textwidth]{Pcp2.eps}\\
%$(a)$&\ \ \ \ \ \ \ \ \ \ &$(b)$\\
%\end{tabular}
\caption{$(a)$ The sphere $S^2\subseteq\R^3$ endowed with the
measuring function $\vec\xi=(\xi_1,\xi_2):S^2\to\R^2$, defined as
$\vec\xi(x,y,z)=(x,z)$ for each $(x,y,z)\in S^2$. The
pseudocritical points of $\vec\xi$ are depicted in bold red. $(b)$
The point $Q$ is a pseudocritical point for $\vec\xi$, since the
vectors $\nabla\xi_1(Q)$ and $\nabla\xi_2(Q)$ are parallel with
opposite verse.}\label{Pcp}
%\end{center}
\end{figure}

In the following, we shall say that
$\rho:\mathbb{R}^k\to\mathbb{R}^h$ is a {\em projection} if there
exist $h$ indices $i_1,\dots,i_h$ such that
$\rho((x_1,\dots,x_k))=(x_{i_1},\dots,x_{i_h})$, for every $\vec
x=(x_1,\dots,x_k)\in\mathbb{R}^k$. In other words, such a function
$\rho$ is used to delete some components of a vector $\vec
x\in\R^k$.

%For every $(\vec x,\vec y)\in\Delta^+$, we need to define the
%projection $\rho_{\vec x,\vec y}$ as follows. Let us consider the
%only admissible pair $(\vec l,\vec b)$ such that $(\vec x,\vec
%y)\in\pi_{(\vec l,\vec b)}$. Let
%$I_Q=\left\{i\in\{1,\dots,k\}:\frac{\varphi_i(Q)-b_i}{l_i}=F(Q)\right\}$.

We are now ready to give the first main result of this paper.

\begin{theorem}\label{Teoremone}
Assume that $\fr\in C^1(\M,\R^k)$. Let $(\vec x,\vec
y)\in\Delta^+$ be a discontinuity point for
$\ell_{(\M,\vec\varphi)}$. Then at least one of the following
statements holds:
\begin{description}
    \item[$(i)$] $\vec x$ is a discontinuity point for
$\ell_{(\M,\vec\varphi)}(\cdot,\vec y)$;
    \item[$(ii)$] $\vec y$ is a discontinuity point for
$\ell_{(\M,\vec\varphi)}(\vec x,\cdot)$.
\end{description}
Moreover, if $(i)$ holds, then a projection $\rho$ exists such
that $\rho(\vec x)$ is a pseudocritical value for
$\rho\circ\vec\varphi$. If $(ii)$ holds, then a projection $\rho$
exists such that $\rho(\vec y)$ is a pseudocritical value for
$\rho\circ\vec\varphi$.
\end{theorem}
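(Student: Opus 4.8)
The plan is to reduce Theorem \ref{Teoremone} to the already-proven Theorem \ref{TeorDisc} by passing from the half-plane parameter $(\vec l,\vec b)$ to an honest projection of $\vec\varphi$. First I would invoke Proposition \ref{Foliazione} to pick the unique admissible pair $(\vec l,\vec b)$ with $(\vec x,\vec y)\in\pi_{(\vec l,\vec b)}$, writing $(\vec x,\vec y)=(s\vec l+\vec b,t\vec l+\vec b)$. Theorem \ref{TeorDisc} then gives statements $(i')$ or $(ii')$ (which are literally the statements $(i)$ and $(ii)$ we want), and moreover, in the $C^1$ case, tells us that $s$ (resp.\ $t$) is an $(\vec l,\vec b)$-pseudocritical value for $\vec\varphi$. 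So there is a point $\bar Q\in\M$ with $F(\bar Q)=s$ and, setting $I_{\bar Q}=\{i:\frac{\varphi_i(\bar Q)-b_i}{l_i}=F(\bar Q)\}=\{i_1,\dots,i_h\}$, a convex combination $\sum_{j}\mu_{j}\nabla\varphi_{i_j}(\bar Q)=\mathbf 0$.

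The key translation step is to define the projection $\rho:\R^k\to\R^h$ by $\rho(\vec x)=(x_{i_1},\dots,x_{i_h})$, so that $\rho\circ\vec\varphi=(\varphi_{i_1},\dots,\varphi_{i_h})$. The relation $\sum_{j}\mu_{j}\nabla\varphi_{i_j}(\bar Q)=\mathbf 0$ with $\mu_j\ge 0$, $\sum_j\mu_j=1$ is exactly the condition in Definition \ref{PseudoCritico} for $\bar Q$ to be a pseudocritical point of $\rho\circ\vec\varphi$ — one only has to observe that the coefficients produced by Definition \ref{PcPoint} (or the $\bar u^{i_j}/l_{i_j}$ normalized to sum to $1$, exactly as in the proof of Theorem \ref{main}) form a genuine convex combination. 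Hence $(\rho\circ\vec\varphi)(\bar Q)=(\varphi_{i_1}(\bar Q),\dots,\varphi_{i_h}(\bar Q))$ is a pseudocritical value for $\rho\circ\vec\varphi$. It remains to check that this value equals $\rho(\vec x)$: since $i_j\in I_{\bar Q}$ we have $\varphi_{i_j}(\bar Q)=l_{i_j}F(\bar Q)+b_{i_j}=l_{i_j}s+b_{i_j}=x_{i_j}$ for each $j$, so indeed $\rho(\vec x)=(x_{i_1},\dots,x_{i_h})=(\rho\circ\vec\varphi)(\bar Q)$ is a pseudocritical value for $\rho\circ\vec\varphi$. The case of $(ii)$ is handled identically, using the ordinate $t$, the point realizing the $(\vec l,\vec b)$-pseudocritical value $t$, and the identity $y_{i}=l_i t+b_i$ for $i$ in the relevant index set.

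The main obstacle — really the only nontrivial point — is making the index set in Definition \ref{PcPoint} line up with the index set defining the projection $\rho$, and making sure the convex-combination coefficients survive the restriction. There is a subtlety: the point $\bar Q$ realizing the pseudocritical value for $\vec\varphi$ need not have $I_{\bar Q}$ equal to any prescribed set; rather $\rho$ must be \emph{chosen depending on $\bar Q$}, namely as the projection onto the coordinates in $I_{\bar Q}$. One must also confirm that restricting attention to the gradients $\nabla\varphi_{i_j}(\bar Q)$, $i_j\in I_{\bar Q}$, is legitimate — i.e.\ that the convex hull of $\{\nabla\varphi_i(\bar Q):i\in I_{\bar Q}\}$ containing $\mathbf 0$ is precisely what Definition \ref{PseudoCritico} asks of $\rho\circ\vec\varphi=(\varphi_{i_1},\dots,\varphi_{i_h})$; this is immediate once the definitions are placed side by side, since $\rho\circ\vec\varphi$ has exactly the component functions $\varphi_{i_1},\dots,\varphi_{i_h}$. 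I would therefore present the argument as: (1) apply Proposition \ref{Foliazione}; (2) apply Theorem \ref{TeorDisc} to get $(i)/(ii)$ and the $(\vec l,\vec b)$-pseudocriticality; (3) unwind Definition \ref{PcPoint} to extract $\bar Q$ and $I_{\bar Q}$; (4) set $\rho$ = projection onto $I_{\bar Q}$ and verify $\rho(\vec x)=(\rho\circ\vec\varphi)(\bar Q)$ is a pseudocritical value via Definition \ref{PseudoCritico}; (5) repeat symmetrically for $\vec y$.
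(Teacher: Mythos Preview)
Your proposal is correct and follows essentially the same argument as the paper's own proof: invoke Proposition~\ref{Foliazione} to obtain the admissible pair, apply Theorem~\ref{TeorDisc} to get the dichotomy $(i)/(ii)$ together with the $(\vec l,\vec b)$-pseudocriticality of $s$ (resp.\ $t$), then unwind Definition~\ref{PcPoint} at the witnessing point $\bar Q$ to read off the index set $I_{\bar Q}$, take $\rho$ to be the projection onto those coordinates, and verify $x_{i_j}=s\,l_{i_j}+b_{i_j}=\varphi_{i_j}(\bar Q)$ so that $\rho(\vec x)=(\rho\circ\vec\varphi)(\bar Q)$ is a pseudocritical value for $\rho\circ\vec\varphi$. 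The paper carries out exactly these steps, with the same key identity $x_{i_j}=\varphi_{i_j}(Q)$; your additional remarks about the convex-combination coefficients and the dependence of $\rho$ on $\bar Q$ are accurate clarifications but do not deviate from the paper's route.
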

\begin{proof}
Because of Proposition \ref{Foliazione}, an admissible pair $(\vec
l,\vec b)$ exists, such that $(\vec x,\vec y)=(s\cdot\vec l+\vec
b,t\cdot\vec l+\vec b)$ for a suitable pair $(s,t)$. Statements
$(i)$ and $(ii)$ are guaranteed by Theorem \ref{TeorDisc},
assuring that either $\vec x$ is a discontinuity point for
$\ell_{(\M,\vec\varphi)}(\cdot,\vec y)$ and $s$ is an $(\vec
l,\vec b)$-pseudocritical value for $\vec\varphi$, or $\vec y$ is
a discontinuity point for $\ell_{(\M,\vec\varphi)}(\vec x,\cdot)$
and $t$ is an $(\vec l,\vec b)$-pseudocritical value for
$\vec\varphi$, or both these conditions hold.

Let us now confine ourselves to assume that $\vec x$ is a
discontinuity point for $\ell_{(\M,\vec\varphi)}(\cdot,\vec y)$
and $s$ is an $(\vec l,\vec b)$-pseudocritical value for
$\vec\varphi$. We shall prove that a projection $\rho$ exists such
that $\rho(\vec x)$ is a pseudocritical value for
$\rho\circ\vec\varphi$. The proof in the case that $\vec y$ is a
discontinuity point for $\ell_{(\M,\vec\varphi)}(\vec x,\cdot)$
and $t$ is an $(\vec l,\vec b)$-pseudocritical value for
$\vec\varphi$ proceeds in quite a similar way. Since $s$ is an
$(\vec l,\vec b)$-pseudocritical value for $\vec\varphi$, by
Definition \ref{PcPoint} there exist a point $Q\in\mathcal{M}$ and
some indices $i_{1},\dots,i_{h}$ with $1\leq h\leq k$, such that
$s=F(Q)=\frac{\varphi_{i_1}(Q)-b_{i_1}}{l_{i_1}}=
\dots=\frac{\varphi_{i_h}(Q)-b_{i_h}}{l_{i_h}}$ and
$\sum_{j=1}^h\lambda_j\cdot\nabla\vec\varphi_{i_j}(Q)=\mathbf 0$,
with $0\leq\lambda_j\leq 1$ for $j=1,\dots,h$, and
$\sum_{j=1}^{h}\lambda_j=1$. Let us now consider the projection
$\rho:\mathbb{R}^k\to\mathbb{R}^h$ defined by setting $\rho(\vec
x)=(x_{i_1},\dots,x_{i_h})$. Since $(\vec x,\vec
y)=(x_1,\dots,x_k,y_1,\dots,y_k)=(s\cdot l_1+b_1,\dots,s\cdot l_k+
b_k, t\cdot l_1+ b_1,\dots,t\cdot l_k+ b_k)$, we observe that
$x_{i_j}=\left(\frac{\varphi_{i_j}(Q)-b_{i_j}}{l_{i_j}}\right)\cdot
l_{i_j}+b_{i_j}=\varphi_{i_j}(Q)$, for every $j=1,\dots,h$.
Therefore it follows that $\rho(\vec x)$ is a pseudocritical value
for $\rho\circ\vec\varphi$.
\end{proof}

\begin{remark}
We stress that Theorem \ref{Teoremone} improves the result
obtained in Theorem \ref{TeorDisc}, providing a necessary
condition for discontinuities of multidimensional size functions
that does not depend on the foliation of the domain $\Delta^+$.
\end{remark}
%    Text of article.

%    Bibliographies can be prepared with BibTeX using amsplain,
%    amsalpha, or (for "historical" overviews) natbib style.
%\bibliographystyle{amsplain}
%    Insert the bibliography data here.

\subsection{Refining Theorem \ref{Teoremone} to less regular measuring
functions}

In this section we generalize Theorem \ref{Teoremone} to the case
of continuous measuring functions. %To proceed, we need the
%following lemma.
%
%\begin{lemma}\label{LemmaUltimo}
%Assume that $\vec\varphi,\vec\psi:\M\to\R^k$ are two continuous
%measuring functions. If
%$\max_{P\in\M}\|\vec\varphi(P)-\vec\psi(P)\|_{\infty}\leq\varepsilon$,
%then, for $\vec x+\vec\varepsilon\prec\vec y-\vec\varepsilon$, it
%holds that $\ell_{(\M,\vec\psi)}(\vec x-\vec\varepsilon,\vec
%y+\vec\varepsilon)\leq\ell_{(\M,\vec\varphi)}(\vec x,\vec
%y)\leq\ell_{(\M,\vec\psi)}(\vec x+\vec\varepsilon,\vec
%y-\vec\varepsilon)$, with
%$\vec\varepsilon=(\varepsilon,\dots,\varepsilon)\in\R^k$.
%\end{lemma}
%\begin{proof}
%To prove the first inequality, it is sufficient to construct an
%injective map $G$ from the quotient of the set
%$\M\langle\vec\psi\preceq\vec x-\vec\varepsilon\rangle$ with
%respect to the equivalence relation $\cong_{\vec
%y+\vec\varepsilon}$ of $\M\langle\vec\psi\preceq\vec
%y+\varepsilon\rangle$-connectedness, to the quotient of the set
%$\M\langle\vec\varphi\preceq\vec x\rangle$ with respect to the
%equivalence relation $\cong_{\vec y}$ of
%$\M\langle\vec\varphi\preceq\vec y\rangle$-connectedness (cf.
%Definition \ref{firstDef} and Definition \ref{secondDef}). To this
%end, for every equivalence class
%$\alpha\in\frac{\M\langle\vec\psi\preceq\vec
%x-\vec\varepsilon\rangle}{\cong_{\vec y+\vec\varepsilon}}$ let us
%fix a representative $Q\in\alpha$, and define $G(\alpha)$ as the
%equivalence class in $\frac{\M\langle\vec\varphi\preceq\vec
%x\rangle}{\cong_{\vec y}}$ containing the point $Q$. It is easy to
%verify that the map $G$ is well-defined and injective. The second
%inequality can be proved analogously.
%\end{proof}
In what follows, we shall call a \emph{special point for a
continuous function $\vec\xi:\M\to\R^h$} any point $Q\in\M$ where
$\vec\xi$ is not $C^1$. If $Q$ is a special point for $\vec\xi$,
the value $\vec\xi(Q)$ will be called a \emph{special value for
$\vec\xi$}.
%
%We can now generalize Theorem \ref{Teoremone} to the case of
%continuous measuring functions.

\begin{theorem}\label{Teoremone2}
Let $(\vec x,\vec y)\in\Delta^+$ be a discontinuity point for
$\ell_{(\M,\vec\varphi)}$. Then at least one of the following
statements holds:
\begin{description}
    \item[$(i)$] $\vec x$ is a discontinuity point for
$\ell_{(\M,\vec\varphi)}(\cdot,\vec y)$;
    \item[$(ii)$] $\vec y$ is a discontinuity point for
$\ell_{(\M,\vec\varphi)}(\vec x,\cdot)$.
\end{description}
Moreover, if $(i)$ holds, then a projection $\rho$ exists such
that $\rho(\vec x)$ is either a special value or a pseudocritical
value for $\rho\circ\vec\varphi$. If $(ii)$ holds, then a
projection $\rho$ exists such that $\rho(\vec y)$ is either a
special value or a pseudocritical value for
$\rho\circ\vec\varphi$.

\end{theorem}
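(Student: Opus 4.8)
The plan is to reduce Theorem \ref{Teoremone2} to Theorem \ref{Teoremone} by an approximation argument, replacing the merely continuous $\vec\varphi$ with a nearby $C^1$ function and controlling what happens in the limit. First I would fix a discontinuity point $(\vec x,\vec y)\in\Delta^+$ for $\ell_{(\M,\vec\varphi)}$. The dichotomy between statements $(i)$ and $(ii)$ is obtained exactly as before: by Proposition \ref{Foliazione} we write $(\vec x,\vec y)=(s\cdot\vec l+\vec b, t\cdot\vec l+\vec b)$ for a unique admissible pair $(\vec l,\vec b)$, and by Proposition \ref{LemmaDisc} the point $(s,t)$ is a discontinuity point of the $1$-dimensional size function $\ell_{(\M,F)}$; Corollary \ref{CorDisc} then splits into the cases $(i')$ resp. $(ii')$, which by Theorem \ref{TeorDisc} are equivalent to $(i)$ resp. $(ii)$. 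So we may assume $(i)$ holds, i.e. $\vec x$ is a discontinuity point for $\ell_{(\M,\vec\varphi)}(\cdot,\vec y)$, and we must produce a projection $\rho$ so that $\rho(\vec x)$ is a special or a pseudocritical value for $\rho\circ\vec\varphi$.

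Next I would take a sequence $\vec\varphi_n\in C^1(\M,\R^k)$ with $\max_{P\in\M}\|\vec\varphi(P)-\vec\varphi_n(P)\|_\infty\to 0$ (smoothing componentwise by convolution, as in the proof of Theorem \ref{ThNuovo}). By the Matching Stability Theorem \ref{StabilityTheorem} applied on each half-plane via the Reduction Theorem \ref{reduction}, and by an argument parallel to Lemma \ref{BoundDMatch} and the proof of Proposition \ref{LemmaDisc}, the discontinuity of $\ell_{(\M,\vec\varphi)}(\cdot,\vec y)$ at $\vec x$ forces, for $n$ large, a discontinuity point $\vec x_n$ of $\ell_{(\M,\vec\varphi_n)}(\cdot,\vec y_n)$ (along the corresponding half-plane, with $\vec y_n\to\vec y$) with $\|\vec x_n-\vec x\|_\infty\to 0$. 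Since $\vec\varphi_n\in C^1$, Theorem \ref{Teoremone} yields a projection $\rho_n$ with $\rho_n(\vec x_n)$ a pseudocritical value for $\rho_n\circ\vec\varphi_n$; concretely there is a point $Q_n\in\M$ and a subset $I_n\subseteq\{1,\dots,k\}$ (the index set of the projection) with $x_{n,i}=\varphi_{n,i}(Q_n)$ for $i\in I_n$ and $\mathbf 0$ in the convex hull of $\{\nabla\varphi_{n,i}(Q_n):i\in I_n\}$. There are finitely many projections, so passing to a subsequence we may assume $I_n\equiv I$ is constant, and by compactness of $\M$ that $Q_n\to\bar Q$; let $\rho$ be the projection onto the coordinates indexed by $I$. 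Then $\rho(\vec x_n)=(\varphi_{n,i}(Q_n))_{i\in I}\to(\varphi_i(\bar Q))_{i\in I}$ by uniform convergence and continuity, and since $\vec x_n\to\vec x$ this gives $\rho(\vec x)=(\varphi_i(\bar Q))_{i\in I}=\rho(\vec\varphi(\bar Q))$, so $\rho(\vec x)$ is a value of $\rho\circ\vec\varphi$ attained at $\bar Q$.

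The final step is the case split at $\bar Q$. If $\rho\circ\vec\varphi$ fails to be $C^1$ at $\bar Q$, then $\rho(\vec x)=\rho(\vec\varphi(\bar Q))$ is by definition a special value for $\rho\circ\vec\varphi$ and we are done. Otherwise $\rho\circ\vec\varphi$ is $C^1$ near $\bar Q$, so $\nabla\varphi_i(\bar Q)$ makes sense for $i\in I$; the point here is that $\nabla\varphi_{n,i}(Q_n)\to\nabla\varphi_i(\bar Q)$ — which follows because, after a further smoothing, one may arrange $\vec\varphi_n\to\vec\varphi$ also in $C^1$ on a neighbourhood of $\bar Q$ where $\rho\circ\vec\varphi$ is $C^1$ (the convolutions converge in $C^1$ wherever the limit is $C^1$) — and then the convex-combination relation $\sum_{i\in I}\lambda_{n,i}\nabla\varphi_{n,i}(Q_n)=\mathbf 0$, with $\lambda_{n,i}\in[0,1]$ summing to $1$, passes to a convergent subsequence of the $\lambda_{n,i}$ to give $\sum_{i\in I}\lambda_i\nabla\varphi_i(\bar Q)=\mathbf 0$ with $\lambda_i\in[0,1]$, $\sum\lambda_i=1$; hence $\bar Q$ is a pseudocritical point for $\rho\circ\vec\varphi$ and $\rho(\vec x)$ a pseudocritical value. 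The main obstacle is precisely this last convergence of gradients: one must be careful that the smoothing can simultaneously be uniform on all of $\M$ (to transfer discontinuities via stability) and $C^1$ on a neighbourhood of the, a priori unknown, limit point $\bar Q$. The clean way around it is to note that special points are exactly where $C^1$-ness fails, so one only needs $C^1$-convergence on the open set where $\vec\varphi$ is $C^1$, and on that open set standard mollification does converge in $C^1_{loc}$; if $\bar Q$ lies outside that set we land in the ``special value'' alternative and no gradient control is needed. The case of $(ii)$ is entirely symmetric, working with $\vec y$, $t$, and the ordinates of cornerpoints.
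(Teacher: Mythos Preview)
Your proposal is correct and follows essentially the same approach as the paper: approximate $\vec\varphi$ by $C^1$ functions, transfer the discontinuity to the approximations, invoke Theorem~\ref{Teoremone} there, extract a constant projection $\rho$ and a convergent sequence of pseudocritical points $Q_n\to\bar Q$, and then split on whether $\rho\circ\vec\varphi$ is $C^1$ at $\bar Q$.

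Two points where the paper's execution is tighter than your sketch. First, the paper stays on the \emph{fixed} half-plane $\pi_{(\vec l,\vec b)}$ throughout: it passes from the discontinuity of $\ell_{(\M,F)}(\cdot,t)$ at $s$ to an actual cornerpoint $(s,\bar t)$ via the Representation Theorem, then uses matching stability to get nearby cornerpoints $(s^i,\bar t^i)$ of $\ell_{(\M,F^i)}$; for large $i$ these give discontinuities of $\ell_{(\M,F^i)}(\cdot,t)$ at $s^i$, with the \emph{same} $t$ and hence the same $\vec y$. There is no need to let $\vec y_n$ vary, and the invocation of the Representation Theorem is the right bridge (your reference to Lemma~\ref{BoundDMatch} and Proposition~\ref{LemmaDisc} here is off-target). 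Second, the paper builds the $C^1$-convergence into the approximating sequence from the outset, requiring $\max_{Q\in K^i_j}\|\nabla\varphi_j(Q)-\nabla\varphi_j^i(Q)\|\le 1/i$ on the compacta $K^i_j=\{Q:d(Q,\mathbb S_j)\ge 1/i\}$; this avoids the circular-sounding ``after a further smoothing'' (you cannot retrofit the sequence once $\bar Q$ is known). Your final paragraph in fact lands on exactly this fix---mollifiers converge in $C^1_{\mathrm{loc}}$ on the open set where $\vec\varphi$ is $C^1$---which is precisely what the sets $K^i_j$ encode.
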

\begin{proof}
%Let $(\vec x,\vec y)$ be a discontinuity point for
%$\ell_{(\M,\vec\varphi)}(\vec x,\cdot)$ . We can assume that $\vec
%x\not\in\vec\varphi(\mathbb{K})$, otherwise the claim trivially
%follows. Therefore, from the compactness of
%$\vec\varphi(\mathbb{K})$, we can choose an open set $V\in\R^k$
%such that $\vec x\in V$ and $\overline{V}$ does not intersect
%$\vec\varphi(\mathbb{K})$. Let us set
%$\vec\varphi^{-1}(\overline{V})=\{P\in\M:\vec\varphi(P)\in\overline{V}\}$.
%Obviously, $\fr$ is $C^1$ at every point of
%$\vec\varphi^{-1}(\overline{V})$.
Because of Proposition \ref{Foliazione}, an admissible pair $(\vec
l,\vec b)$ exists, such that $(\vec x,\vec y)=(s\cdot\vec l+\vec
b,t\cdot\vec l+\vec b)$ for a suitable pair $(s,t)$. Statements
$(i)$ and $(ii)$ are guaranteed by Theorem \ref{TeorDisc},
assuring that either $\vec x$ is a discontinuity point for
$\ell_{(\M,\vec\varphi)}(\cdot,\vec y)$ and $s$
is a discontinuity point for $\ell_{(\M,F)}(\cdot,t)$, or $\vec y$ is
a discontinuity point for $\ell_{(\M,\vec\varphi)}(\vec x,\cdot)$
and $t$
is a discontinuity point for $\ell_{(\M,F)}(s,\cdot)$, or both these conditions hold.

Let us now assume that $\vec x$ is a discontinuity point for
$\ell_{(\M,\vec\varphi)}(\cdot,\vec y)$ and $s$
is a discontinuity point for $\ell_{(\M,F)}(\cdot,t)$.
We shall prove that a projection $\rho$ exists such that
$\rho(\vec x)$ is either a special value or a pseudocritical value
for $\rho\circ\vec\varphi$.

Call $\mathbb{S}_j$ the set of special points of $\varphi_j:\M\to\R$, for $j=1,\ldots,k$. For
every $i\in\mathbb{N}^+=\mathbb{N}\setminus\{0\}$ and $j=1,\ldots,k$, consider the
compact set $K^i_j=\{Q\in\M:d(Q,\mathbb{S}_j)\geq\frac{1}{i}\}$, and
take a $C^1$ function
$\varphi^i_j:\M\to\R$ such that
\begin{enumerate}
\item[(1)]
$\max_{Q\in\mathcal{M}}|\varphi_j(Q)-\varphi^i_j(Q)|\leq\frac{1}{i}$;
\item[(2)] $\max_{Q\in
K^i_j}\|\nabla\varphi_j(Q)-\nabla\varphi_j^i(Q)\|\leq\frac{1}{i}$.
\end{enumerate}
This can be done by considering the convolution of each component
$\varphi_j$, $j=1,\dots,k$, with a suitable ``regularizing''
function.

From now on, for the sake of conciseness we shall use the symbols
$F$ and $F^i$ to denote the functions  $F_{(\vec l\,\vec
b)}^{\fr}=\max_{j=1,\dots,k}\left\{\frac{\varphi_j-b_j}{l_j}\right\}$
and $F_{(\vec l\,\vec
b)}^{\fr^i}=\max_{j=1,\dots,k}\left\{\frac{\varphi_j^i-b_j}{l_j}\right\}$,
respectively. For
every $i\in\mathbb{N}^+$, we also set $\fr^i=(\varphi^i_1,\ldots,\varphi^i_k)$.

Since $s$ is a discontinuity point for $\ell_{(\M,F)}(\cdot,t)$,
by the Representation Theorem \ref{RepresentationTheorem} it
follows that a cornerpoint of $\ell_{(\M,F)}$ (proper or at
infinity) of coordinates $(s,\bar t)$ exists, with $\bar t>t$.
Moreover, by condition $(1)$ we have that the sequence $(F^i)$
uniformly converges to $F$. Therefore, the Matching Stability
Theorem \ref{StabilityTheorem} implies that a sequence $((s^i,\bar
t^i))$ exists, such that $(s^i,\bar t^i)$ is a cornerpoint for
$\ell_{(\M,F^i)}$ and $((s^i,\bar t^i))$ converges to $(s,\bar
t)$. For every large enough index $i$, once more by the
Representation Theorem \ref{RepresentationTheorem}, $s^i$ is then
a discontinuity point for $\ell_{(\M,F^i)}(\cdot,t)$, and hence by
Theorem \ref{TeorDisc} we have that $\vec x^{\,i}=s^i\cdot\vec
l+\vec b$ is a discontinuity point for
$\ell_{(\M,\fr^i)}(\cdot,\vec y)$. From Theorem \ref{Teoremone} it
follows that a projection $\rho^i$ exists, such that $\rho^i(\vec
x^{\,i})$ is a pseudocritical value for
$\rho^i\circ\vec\varphi^i$. Possibly by considering a subsequence,
we can suppose that all the $\rho^i$ equal a projection $\rho$.
Moreover, we can consider a sequence $(Q^i)$ such that
$Q^i\in\M$, $\rho\circ\fr^{i}(Q^i)=\rho(\vec x^{\,i})$ and $Q^i$ is a
pseudo-critical point for $\rho\circ\fr^i$. Furthermore, by the
compactness of $\M$, possibly by extracting a subsequence we can
assume $(Q^i)$ converging to a point $Q\in\M$. From the continuity
of $\fr$ and from the uniform convergence of $(\fr^i)$ to $\fr$,
we can deduce
\begin{enumerate}
\item[(3)]
$\rho\circ\fr(Q)=\lim_{i\to\infty}\rho\circ\fr(Q^i)=\lim_{i\to\infty}\rho\circ\fr^i(Q^i)=\lim_{i\to\infty}\rho(\vec
x^{\,i})=\rho(\vec x)$.
\end{enumerate}
If $\rho(\vec x)$ is a special value for $\rho\circ\fr$ then our claim is proved. 
If $\rho(\vec x)=(x_{j_1},\ldots,x_{j_h})$ is not a special value for $\rho\circ\fr$ then $Q\not\in \mathbb{S}_{j_1}\cup\ldots\cup\mathbb{S}_{j_h}$. Hence, for any
large enough index $i$, it follows that $Q,Q^i\in K^i_{j_1}\cap\ldots\cap K^i_{j_h}$.  By recalling
that each point $Q^i$ is a pseudocritical point for
$\rho\circ\fr^i$, and by observing that the property of being a
pseudocritical point passes to the limit, we get that $\rho(\vec
x)$ is a pseudocritical value for $\rho\circ\vec\varphi$. In other
words, we have just proved that if $\vec x$ is a discontinuity
point for $\ell_{(\M,\fr)}(\cdot,\vec y)$, then a projection
$\rho$ exists such that $\rho(\vec x)$ is either a special value
or a pseudocritical value for $\rho\circ\vec\varphi$.

Analogously, it is possible to prove that if $\vec y$ is a
discontinuity point for $\ell_{(\M,\fr)}(\vec x,\cdot)$, then a
projection $\rho$ exists such that $\rho(\vec y)$ is either a
special value or a pseudocritical value for
$\rho\circ\vec\varphi$.
\end{proof}

\subsection{Consequences of our results}

The results proved in this paper imply several relevant
consequences. First of all they contribute to clarifying the
structure of multidimensional size functions. In order to explain
this point let us consider the case of a compact smooth surface
$\mathcal{S}$ endowed with a smooth function $\fr : \mathcal{S}\to
\R^2$. It is immediate to verify that all pseudocritical points
belong to the Jacobi set of $\fr$, that is the set where the
gradients $\nabla\varphi_1$ and $\nabla\varphi_2$ are parallel.
This implies (cf. \cite{HeHa02}) that in the generic case the
pseudocritical points belong to a $1$-submanifold $\mathcal{J}$ of
$\mathcal{S}$ (in local coordinates such a manifold is determined
by the vanishing of the Jacobian of $\fr$). Now, Theorem
\ref{Teoremone2} allows us to claim that all discontinuity points
$(x_1,x_2,y_1,y_2)$ of the size function $\ell_{(\M,\vec\varphi)}$
belong either to $\mathcal{J}\times \R^2$ or to
$\R^2\times\mathcal{J}$. For the computation of $\mathcal{J}$ we
refer to \cite{HeHa02}.

In the light of this new information, we can imagine the possibility of constructing new
algorithms to efficiently compute multidimensional size functions.
Let us consider the connected components in which the domain of
$\ell_{(\M,\vec\varphi)}$ is divided by the set
$(\mathcal{J}\times \R^2)\cup (\R^2\times\mathcal{J})$. Since size
functions are locally constant at each point of continuity (we
recall that they are natural-valued), we immediately obtain that
$\ell_{(\M,\vec\varphi)}$ is constant at each of those connected
components. It follows that the computation of
$\ell_{(\M,\vec\varphi)}$ just requires the computation of its
value at only one point for each connected component. These
observations open the way to new and more efficient methods of
computation for multidimensional size functions.

Our results also make new pseudodistances between size functions
computable in an easier way. Indeed, let us consider two size
pairs $(\M,\vec\varphi)$, $(\mathcal{N},\vec\psi)$ and the value
$\delta_H$ giving the Hausdorff distance between the sets where
$\ell_{(\M,\vec\varphi)}$ and $\ell_{(\mathcal{N},\vec\psi)}$ are
discontinuous. It is trivial to check that the function $d_D$
defined by setting
$d_D\left(\ell_{(\M,\vec\varphi)},\ell_{(\mathcal{N},\vec\psi)}\right)=\delta_H$
is a pseudodistance between multidimensional size functions.
Helping us to localize the discontinuities of multidimensional
size functions, Theorem \ref{Teoremone2} makes the computation of
$d_D$ easier.

\section*{Conclusions and future work}

In this paper we have proved that a discontinuity point for a
multidimensional size function has at least one special or
pseudocritical coordinate, under the hypothesis that the
considered measuring function is (at least) continuous. This
result is a first step in the development of Size Theory for
$\R^k$-valued measuring functions. Indeed, the localization of the
unique points where $k$-dimensional size functions can be
discontinuous allows us to better understand Topological
Persistence and opens the way to the formulation of effective
algorithms for its computation. On the other hand, it is worth
noting that our framework could be applicable also to the study of
discontinuities in persistent algebraic topology, including
persistent homology groups and size homotopy groups. However, some
difficulties could derive from the present lack of the analogue of
Theorem \ref{StabilityTheorem} for those structures, i.e. a
stability result in the case of {\em continuous} (possibly
non-tame \cite{CoEdHa07}) measuring functions. These last research
lines appear to be promising, both from the theoretical and the
applicative point of view.
%%%%%%%%%%%%%%%%%%%%%%%%%%%%%%%%%%%%%%%%%%%%%%%%%%%%%%%%%%%%%%%%%
%%%%%%%%%%%%%%%%%%%%%%%%%%%%%%%%%%%%%%%%%%%%%%%%%%%%%%%%%%%%%%%%%
%%%%%%%%%%%%%%%        IMPORTANTE!!!!!!!!!!!!!!!!!!!!!!!!!!!!!!!!
%%%%  UNA VOLTA ACCETTATO, AGGIUNGERE RIFERIMENTO ALLA STABILITA'
%%%%%%%%%%%%%            IN MULTIDIMENSIONAL PERSISTENCE HOMOLOGY
%%%%%%%%%%%%%%%%%%%%%%%%%%%%%%%%%%%%%%%%%%%%%%%%%%%%%%%%%%%%%%%%%
%%%%%%%%%%%%%%%%%%%%%%%%%%%%%%%%%%%%%%%%%%%%%%%%%%%%%%%%%%%%%%%%%

\subsection*{Acknowledgements}
Work performed within the activity of ARCES ``E. De Castro'',
University of Bologna, under the auspices of INdAM-GNSAGA.

The authors thank Davide
Guidetti (University of Bologna) for his helpful advice.

This paper is dedicated to Martina and Riccardo.

\renewcommand{\thesection}{A}
\setcounter{equation}{0}  % reset counter

\section{Appendix}\label{Appendix}

\subsection{Relationship between Size Theory and Persistent Homology}

Size Theory and Persistent Homology are deeply connected theories.
We shall recall some similarities and differences between them in
this appendix. For more details we refer to the survey papers
\cite{BiDeFaFrGiLaPaSp08} and \cite{EdHa08}.

Size Theory was born at the beginning on the 1990s (cf.
\cite{Fr90,Fr91}) as a mathematical approach to shape comparison.
The main idea is to describe shape as a pseudometric (the
\emph{natural pseudodistance}) between topological spaces endowed
with real-valued functions, called \emph{measuring functions}. The
measuring functions are used as descriptors of the properties with
respect to which the topological spaces are compared. For example,
if we are interested in the comparison of two objects $A$ and $B$
with respect to their bumps and hollows, it can be natural to
consider two subsets of $\R^3$ representing their bodies, endowed
with two functions associating with each point its distance from the
center of mass of the body it belongs to. On the other hand, if we
are interested in the colorings of $A$ and $B$, we can consider
two surfaces endowed with functions representing the color taken
at each point. The natural pseudodistance between two pairs
\emph{(topological space, measuring function)}, called \emph{size
pairs}, is the infimum of the change of the measuring function
under the action of all possible homeomorphisms from one
topological space to the other. Size functions and size homotopy
groups (their algebraic-topological equivalent; cf. \cite{FrMu99})
appeared as mathematical tools useful for computing lower bounds for
the natural pseudodistance, introducing \emph{ante litteram} the
study of Topological Persistence.

Persistent Homology was born approximately ten years later, at the
beginning of the 00s, as a mathematical approach to studying the
homology of topological spaces known just by a sampling. In this
case, the attention was focused on the radius $r$ of the spheres
centered at the sample points, whose union approximates the
topological space. The problem of choosing the value of $r$ led to the concept
of persistence, emphasizing the topological properties stable
under the change of $r$. In other words, the main goal was
topological simplification, in order to get the relevant
topological information concerning the object under study. The
value $r$, playing the role of the measuring function in Size
Theory, has been subsequently extended to more general functions.

Despite their different origins and goals, Size Theory and
Persistent Homology have developed similar structures and
concepts, under different names. In order to help readers who are 
not familiar with both these theories, this
section compares some of their key concepts, explaining
their reciprocal links. These connections and
relationships are summarized in Table~\ref{t1}.

\begin{table}[h]

 \begin{tabular}{|c|c|c|c|}
  \hline
  % after \\: \hline or \cline{col1-col2} \cline{col3-col4} ...
  \textbf{Size Theory}          & \textbf{references}         & \textbf{Persistent Homology}   & \textbf{references}      \\
  \hline\hline
  \emph{size pair}              & \cite{Fr90,Fr96}            & \emph{filtration}              & \cite{CoEdHa07,EdLeZo02} \\
                                &                             & \emph{of a complex}            &                          \\
  \hline
  \emph{natural pseudodistance} &  \cite{DoFr04,DoFr07}       &                                &                          \\
  \emph{between size pairs}     &                             &    \rule[3mm]{25mm}{.1pt}      & \rule[3mm]{10mm}{.1pt}   \\
  \hline
  \emph{measuring}              &  \cite{Fr90,Fr96}           & \emph{filtrating}              & \cite{EdHa08,EdLeZo02}   \\
  \emph{function }              &                             & \emph{function}                &                          \\
  \hline
  \emph{(multidimensional)}     &\cite{BiCeFrGiLa08,Fr90,Fr91}& \emph{$0$th rank invariant}    &   \cite{CaZo09}          \\
  \emph{size function}          &                             &                                &                          \\
  \hline
  \emph{size homotopy group}    &  \cite{FrMu99}              &    \rule[1mm]{25mm}{.1pt}      &\rule[1mm]{10mm}{.1pt}    \\
  \hline
  \emph{size functor}           &  \cite{CaFePo01}            &    \rule[1mm]{25mm}{.1pt}      &\rule[1mm]{10mm}{.1pt}    \\
  \hline
  \rule[-1mm]{25mm}{.1pt}       &    \rule[-1mm]{10mm}{.1pt}  & \emph{persistent}              & \cite{EdHa08,EdLeZo02}   \\
                                &                             &  \emph{$k$-th homology group}  &                          \\
  \hline
  \emph{multiplicity}           & \cite{FrLa01,LaFr97}        & \emph{multiplicity of points}  &  \cite{EdHa08,EdLeZo02}  \\
  \emph{of cornerpoints}        &                             & \emph{in persistence diagrams} &                          \\
  \hline
  \emph{formal series}          &                             &                                &                          \\
  \emph{of cornerpoints}        &  \cite{FrLa01,LaFr97}       & \emph{persistence diagrams}    & \cite{EdHa08,EdLeZo02}   \\
  \emph{and cornerlines}        &                             &                                &                          \\
  \hline
  \emph{multidimensional}       &  \cite{BiCeFrGiLa08,FrMu99} & \emph{multidimensional}        & \cite{Ca09,CaZo09}       \\
  \emph{Size Theory}            &                             & \emph{Persistent Homology}     &                          \\
  \hline
\end{tabular}
\vspace{4mm}\caption{Approximative correspondence between some
concepts in Size Theory and Persistent Homology. For each concept
bibliographic references are reported. A line denotes a missing
correspondence.}\label{t1}
\end{table}

As we have already said previously, the objects under study in
Size Theory are the pairs \emph{(topological space, measuring
function)}, called size pairs. The main results of this paper,
stated in Theorem \ref{Teoremone} and Theorem \ref{Teoremone2},
are given under the assumption that the topological space is a
closed $C^1$ Riemannian manifold, while the measuring function is
supposed to be at least continuous. In Persistent Homology the
object of study is usually a simplicial complex $K$, endowed with
a filtration, i.e. a nested sequence of subcomplexes that starts
with the empty complex and ends with the complete complex $K$. The
filtration is usually obtained by a real-valued function defined
at the vertices of $K$ and extended to the simplexes. Each level
$K_c$ in the filtration is obtained by taking just the simplexes
having vertexes at which the function takes a value less than (or
equal to) a parametrical value $c$.

As a matter of fact, Size Theory is more focused on continuous
data (topological spaces or manifolds, endowed with continuous or
$C^k$ functions), while Persistent Homology usually studies
discrete structures (simplicial complexes endowed with piecewise
linear functions) or structures satisfying some finiteness
hypotheses (topological spaces endowed with tame functions). As a
consequence, the results obtained in the two theories are often
expressed and proved in similar but different mathematical
settings. For example, while the fact that the persistent homology
groups are finitely generated is just a trivial consequence of the
assumed hypotheses, the finiteness of size functions requires a
(simple but not trivial) proof. Analogously, while the
localization of discontinuities for the rank of the $0$-th
persistent homology group (i.e. the $0$-th rank invariant) is
usually trivial in the $1$-dimensional setting, this does not hold
for the discontinuities of a size function. This is actually what
happens in this paper, where the measuring functions are not
required to be tame (cf. \cite{CoEdHa07} for a formal definition
of tame function). Obviously, this creates many technical
difficulties, even in the case of $C^1$ measuring functions, since
they are allowed to have an infinite number of critical values.
This kind of problem does not usually appear in literature
regarding Persistent Homology.

Size functions are the most usual tool in Size Theory, while
persistent homology groups constitute the main object of research
in Persistent Homology. Size functions are simply the rank of
persistent $0$-homology groups. On the other hand, the
relationship between persistent homology groups (introduced in
\cite{EdLeZo02}) and size homotopy groups (introduced in
\cite{FrMu99}) is the same that links homology groups and homotopy
groups. For example, the first persistent homology group is the
Abelianization of the first size homotopy group.

Both size functions and persistent homology groups are often
represented by sets of points with multiplicities. The
representation for size functions is called \emph{formal series of
cornerpoints (proper and at infinity)} and was introduced in
\cite{LaFr97}. The correspondent representation for persistent
homology groups is named \emph{persistence diagram} and was
introduced in \cite{EdLeZo02}. The formulas defining the
multiplicities of the considered points are quite analogous.
However, because of the hypotheses usually assumed in Persistent
Homology, persistence diagrams are finite collections of points,
while the formal series used in Size Theory can contain an
infinite number of cornerpoints. The $k$-Triangle Lemma in
\cite{EdLeZo02} is essentially equivalent to the Representation
Theorem recalled in this paper and proved in \cite{FrLa01} (under
slightly different hypotheses).

Formal series representing size functions and persistence diagrams
representing the ranks of persistent homology groups can be
compared by using some matching distances (cf.
\cite{FrLa01,LaFr97} for size functions and \cite{CoEdHa07} for
persistence diagrams). The matching distance used in this paper
has been studied in \cite{dA02,dAFrLa} for size functions and in
\cite{CoEdHa07} for persistent homology groups.

The study of multidimensional measuring functions has started in
\cite{FrMu99} for Size Theory and in \cite{CaZo09} for Persistent
Homology.

%\bibitem{Fr90}
%P.~Frosini,
%\emph{A distance for similarity classes of submanifolds of a Euclidean space},
%Bulletin of the Australian Mathematical Society, \textbf{42}, 3 (1990), 407-416.

%\bibitem{Fr91}
%P.~Frosini, \emph{Measuring shapes by size functions}, Proc. of SPIE, Intelligent Robots and Computer Vision X: Algorithms and %Techniques, Boston, MA 1607 (1991), 122-133.

%\bibitem{LaFr97}
%C.~Landi, P.~Frosini, \emph{New pseudodistances for the size function space}, Proc. SPIE Vol. 3168, p. 52-60, Vision Geometry %VI, Robert A. Melter, Angela Y. Wu, Longin J. Latecki (eds.), 1997.

\end{document}